\newtheorem*{rep@theorem}{\rep@title}
\newcommand{\newreptheorem}[2]{%
\newenvironment{rep#1}[1]{%
 \def\rep@title{#2 \ref{##1}}%
 \begin{rep@theorem}}%
 {\end{rep@theorem}}}
\newtheorem{theor}{Theorem} 
\newtheorem{theo}{Theorem} [section]
\newtheorem{lemma}[theo]{Lemma}
\newcommand{\be}{\begin{equation}}
\newcommand{\ee}{\end{equation}}
\def\multiset#1#2{\ensuremath{\left(\kern-.3em\left(\genfrac{}{}{0pt}{}{#1}{#2}\right)\kern-.3em\right)}}
\begin{document}

\author{Anthony Leverrier$^1$, Ra\'{u}l Garc{\'i}a-Patr\'{o}n$^2$, Renato Renner$^1$, Nicolas J. Cerf$^3$}
\affiliation{$^1$Institute for Theoretical Physics, ETH Zurich, 8093 Zurich, Switzerland }
\affiliation{$^2$Max-Planck Institut fur Quantenoptik, Hans-Kopfermann Str. 1, D-85748 Garching, Germany}
\affiliation{$^3$Quantum Information and Communication, Ecole Polytechnique de Bruxelles, CP 165, Université Libre de Bruxelles, 1050 Bruxelles, Belgium}

\title{Security of continuous-variable quantum key distribution against general attacks}

\date{\today}

\begin{abstract}
We prove the security of Gaussian continuous-variable quantum key distribution against arbitrary attacks in the finite-size regime. The novelty of our proof is to consider symmetries of quantum key distribution in phase space in order to show that, to good approximation, the Hilbert space of interest can be considered to be finite-dimensional, thereby allowing for the use of the postselection technique  introduced by Christandl, Koenig and Renner (\emph{Phys.\ Rev.\ Lett.}\ 102, 020504 (2009)). Our result greatly improves on previous work based on the de Finetti theorem which could not provide security for realistic, finite-size, implementations.
\end{abstract}

\maketitle

Quantum key distribution (QKD), the art of generating a secret key among distant parties in an untrusted environment, is certainly the most studied quantum cryptographic primitive. 
Since the seminal papers of Bennett and Brassard \cite{bb84} and Ekert \cite{eke91}, considerable progress has been made in terms of security analysis \cite{SBC08}. 
Security against arbitrary attacks has been proven for several protocols, even in the realistic finite-size regime. This is quite remarkable because of the very large number of possible attacks against which security needs to be guaranteed. 
Security proofs generally circumvent this problem by using the natural permutation invariance of most QKD protocols which allows to restrict the analysis to the much smaller class of \emph{collective} attacks, where the eavesdropper interacts independently and identically with every communication signal. 
In an entanglement-based description of QKD, this amounts to assume that the joint state $\rho_{A^nB^n}$ that the two legitimate parties, Alice and Bob, hold after the initial distribution of entanglement, has an identical and independently distributed (i.i.d.) structure $\rho_{A^nB^n}=\sigma_{AB}^{\otimes n}$, where  $n$ is the number of quantum signals exchanged during the protocol.

One usually achieves this reduction from general to collective (i.i.d.) attacks thanks to either de Finetti-type theorems \cite{ren07} or the postselection technique \cite{CKR09}. Unfortunately, these tools cannot be directly applied to continuous-variable (CV) protocols because they require the dimension of the Hilbert space to be finite (and small compared to $n$). However, by prepending a suitable energy test to the protocol, it is still possible to use a specific variant of the de Finetti theorem and derive the security of CV protocols, but only for impractically large values of $n$ \cite{RC09}.
Here, we wish to improve the analysis of \cite{RC09} to prove the security of continuous-variable QKD in the realistic finite-size scenario.

The specificity of CV protocols is that the detection consists of (homodyne or heterodyne) measurements of the light-field quadratures (see Ref.\ \cite{WPG12} for a review). From an experimental point of view, they present many advantages over discrete-variable protocols. Most importantly, they can be implemented with standard telecom components and are compatible with Wavelength Division Multiplexing \cite{QZQ10}, which is an important advantage when integrating QKD into real-world telecommunication networks.
Moreover, quadrature measurements do not require any photon counters and higher repetition rates can be achieved.
Distribution of secret keys over long distances (more than 80 km) is currently achievable \cite{JKL12}, making CV protocols competitive with respect to their discrete-variable counterparts. 
Their security analysis, however, is technically challenging due to the infinite-dimensional nature of the relevant Hilbert space. 

Among CV protocols, the so-called Gaussian ones are the most popular ones, primarily due to their experimental simplicity. 
In a prepare-and-measure scheme, one party, Alice, prepares coherent or squeezed states with a Gaussian modulation and sends them to a receiver, Bob, who applies a homodyne or heterodyne measurement.
In the equivalent entanglement-based scheme, Alice prepares an entangled two-mode squeezed vacuum state (the continuous-variable equivalent of the Bell pair), keeping one mode and sending the other one to Bob through the quantum channel. 
Then both parties measure their respective mode with either a homodyne or heterodyne detection, obtaining two strings of correlated real-valued data. Finally, Alice and Bob extract a secret key through information reconciliation and privacy amplification. 

The security of Gaussian protocols in the aymptotic regime is rather well understood: de Finetti's theorem guarantees that collective attacks are optimal \cite{RC09} and Gaussian attacks are known to be optimal among collective ones \cite{GC06,NGA06}. 
On the other hand, their security in the much more relevant  finite-size regime is less clear, due to the difficulty of the reduction from general attacks to the i.i.d. scenario. 
Currently, two results in this direction are known for CV protocols, either based on a de Finetti theorem as stated above or on an uncertainty relation. 
The de Finetti approach \cite{RC09} is unsuitable in practical scenarios because $n$, the required number of signals exchanged during the protocol, is too large.  
The second approach, using an entropic uncertainty inequality \cite{FFB11}, works for more reasonable values of $n$ but unfortunately does not converge towards the asymptotic key rate secure against collective attacks in the limit of infinitely many signals. Consequently, the tolerated losses are quite low,  corresponding to a few hundred meters only. 

In the remainder of this Letter, we first explain how to modify a protocol secure against collective attacks by the addition of an initial test in order to enforce a certain property of the entangled state, namely a low single-mode photon number. Then, we apply these ideas to the specific case of a Gaussian protocol where Bob performs heterodyne measurements and establish its security against arbitrary attacks.

{\bf Main result}.--- In this paper, we give the first security proof of CV QKD against general attacks, which guarantees a secret key rate for realistic experimental regimes, in terms of losses and noise. 
As in \cite{RC09,FFB11}, this is achieved by prepending an initial test to a protocol already proven secure against collective attacks. The purpose of the test is to verify that the quantum state shared by Alice and Bob is well-approximated by a state living in a reasonably small dimensional Hilbert space. Then, one can use the postselection technique \cite{CKR09} which shows roughly that if a (permutation-invariant) protocol is $\epsilon$-secure against collective attacks, then it is $\tilde{\epsilon}$-secure against general attacks with $\tilde{\epsilon} = \epsilon \times \mathrm{poly}(n)$.

Our result improves that of Ref.\ \cite{RC09} for two reasons. First the postselection technique guarantees much better bounds than the approach based on a de Finetti theorem when reducing general to collective attacks \cite{SLS10}. Moreover, and this is in fact the main technical contribution of the present work, we exploit specific symmetries of the CV QKD protocol in phase space instead of the usual and less powerful permutation symmetry. We therefore obtain a very tight bound on the effective dimension of the quantum state. More precisely, the QKD protocol is invariant if Alice and Bob process their respective modes with global conjugate passive linear transformations of their $n$ modes before performing their measurements. This ``rotational-symmetry'' in phase space is better suited to analyze CV protocols \cite{LKG09}, allowing to precisely bound the effective number of photons per mode from the results of random quadrature measurements. This is in stark contrast with Ref.\ \cite{RC09} where the test only exploited the permutation symmetry of the protocol.

{\bf QKD protocols and their security}.--- A QKD protocol is a CP map from the infinite-dimensional Hilbert space $(\mathcal{H}_A\otimes \mathcal{H}_B)^{\otimes n}$, corresponding to the initially distributed entanglement,  to the set of pairs $(S_A,S_B)$ of $l$-bit strings (Alice and Bob's final keys, respectively) and $C$, a transcript of the classical communication.
In order to assess the security of a given QKD protocol $\mathcal{E}$ in a composable framework, one compares it with an ideal protocol \cite{MKR09}. Such an ideal protocol $\mathcal{F}$ can be constructed (at least in principle) by concatenating the protocol with a map $\mathcal{S}$ taking $(S_A,S_B,C)$ as input and outputting the triplet $(S,S,C)$ where the string $S$ is a perfect secret key (uniformly distributed and unknown to Eve) with the same length as $S_A$, that is $\mathcal{F} = \mathcal{S}\circ\mathcal{E}$. Then, a protocol will be called \emph{$\epsilon$-secure} if the advantage in distinguishing it from an ideal version is not larger than $\epsilon$. This advantage is quantified by (one half of) the  diamond norm defined by
\begin{equation}
||\mathcal{E} - \mathcal{F}||_\diamond := \sup_{\rho_{ABE} } \left\|(\mathcal{E}-\mathcal{F})\otimes \mathrm{id}_\mathcal{K} (\rho_{ABE})\right\|_1,
\end{equation}
where the supremum is taken over $(\mathcal{H}_A\otimes \mathcal{H}_B)^{\otimes (n+k)} \otimes \mathcal{K}$ for any auxiliary system $\mathcal{K}$. 

{\bf Prepending a test}.--- Our main technical result is a reduction of the security against general attacks to that against collective attacks, for which security has already been proved in earlier work. 
Let us therefore suppose that our CV QKD protocol of interest, $\mathcal{E}_0$, is secure against collective attacks. We will slightly modify it by prepending an initial test $\mathcal{T}$. More precisely, $\mathcal{T}$ is a CP map taking a state in a slightly larger Hilbert space, $(\mathcal{H}_A\otimes \mathcal{H}_B)^{\otimes (n+k)}$, measuring $k$ randomly chosen modes (identical for Alice and Bob) and comparing the measurement outcome to a value fixed in advance. 
The test succeeds if this norm is small, meaning that the global state is compatible with a state containing only a low number of photons per mode, that is a state well-described in a low dimensional Hilbert space, which leads to better bounds when using the post-selection technique. 
Depending on the outcome, either the whole protocol aborts, or one applies $\mathcal{E}_0$ on the $n$ remaining modes. 
A more precise description is provided as part of the ``heterodyne protocol'' below. 

For our purpose, it is crucial that the test is feasible in practice. This is the case here since it only involves $k$ additional homodyne (or heterodyne) measurements compared to the original scheme $\mathcal{E}_0$, with $k$ much smaller than $n$, as well as applying some classical post processing to Alice and Bob's data. 

In order to establish that the protocol $\mathcal{E}:=\mathcal{E}_0\circ \mathcal{T}$ is $\epsilon$-secure against arbitrary attacks, one needs to bound $||\mathcal{E}-\mathcal{F}||_\diamond$. 
The postselection theorem \cite{CKR09} allows one to bound the diamond norm between such maps by simply considering i.i.d.\ states (i.e. the equivalent of collective attacks), but only when the maps act on finite dimensional spaces. 
We address this issue by introducing another CP map $\mathcal{P}$ which projects a state in $(\mathcal{H}_A\otimes \mathcal{H}_B)^{\otimes n}$ onto a low-dimensional Hilbert space $(\overline{\mathcal{H}}_A\otimes \overline{\mathcal{H}}_B)^{\otimes n}$ where $\overline{\mathcal{H}}_A := \mathrm{Span}(|0\rangle, |1\rangle, \cdots, |d_A-1\rangle)$ and $\overline{\mathcal{H}}_B := \mathrm{Span}(|0\rangle, |1\rangle, \cdots, |d_B-1\rangle)$ are respectively a $d_A$ and a $d_B$-dimensional subspace of the Fock spaces $\mathcal{H}_A$ and $\mathcal{H}_B$.
We define (virtual) protocols $\tilde{\mathcal{E}}:=  \mathcal{E}_0 \circ \mathcal{P} \circ \mathcal{T}$ and $\tilde{\mathcal{F}}:=  \mathcal{S} \circ \tilde{\mathcal{E}}$. 
The security of the protocol $\mathcal{E}$ is then a consequence of the following derivation:
\begin{eqnarray}
 ||\mathcal{E} - \mathcal{F}||_\diamond &\leq &  ||\tilde{\mathcal{E}} - \tilde{\mathcal{F}}||_\diamond  +  ||\mathcal{E} - \tilde{\mathcal{E}}||_\diamond+ ||\mathcal{F} - \tilde{\mathcal{F}}||_\diamond \nonumber\\
&\leq &  ||\tilde{\mathcal{E}} - \tilde{\mathcal{F}}||_\diamond  +  ||\mathcal{E}_0  \circ (\mathrm{id}- \mathcal{P}) \circ \mathcal{T}||_\diamond \nonumber \\
&&+||\mathcal{F}_0  \circ (\mathrm{id}- \mathcal{P}) \circ \mathcal{T}||_\diamond \nonumber\\
&\leq &  ||\tilde{\mathcal{E}} - \tilde{\mathcal{F}}||_\diamond  + 2 || (\mathrm{id}- \mathcal{P}) \circ \mathcal{T}||_\diamond ,
\end{eqnarray} 
where we used the triangle inequality and the fact that the CP maps $\mathcal{E}_0$ and $\mathcal{F}_0$ cannot increase the diamond norm. 
The first term can be bounded thanks to the postselection theorem because $\tilde{\mathcal{E}}$ and $\tilde{\mathcal{F}}$ are finite dimensional, and it can be made arbitrary small at the price of reducing slightly the key rate. The second term can be bounded thanks to the following theorem for which we give a proof sketch for the ``heterodyne protocol'' below (and a full proof in Appendix \ref{proof_hetero}).
\begin{theor}(Informal.)
\label{informal}
For any rotationally-invariant state $\rho_{ABE} \in (\mathcal{H}_A\otimes \mathcal{H}_B)^{\otimes (n+k)} \otimes \mathcal{K}$,
\begin{equation}
||\left(\mathrm{id}_{\mathcal{H}^{\otimes n}} - \mathcal{P} \right) \circ \mathcal{T} \otimes \mathrm{id}_\mathcal{K} (\rho_{ABE}) ||_1 \leq \epsilon,
\end{equation}
where $\epsilon$ is a function of $k, n$, the dimensions $d_A$ and $d_B$ for the projection $\mathcal{P}$ and the value of the threshold in the test $\mathcal{T}$. 
\end{theor}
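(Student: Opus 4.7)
The strategy is to exploit the phase-space rotational invariance to establish a sampling-type bound on the photon-number content of the untested modes, then convert this into the claimed trace-norm inequality via the projection $\mathcal{P}$.

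First, I would unpack the structure imposed by the rotational symmetry. Since $\rho_{ABE}$ is invariant under $U_A \otimes U_B^* \otimes \mathrm{id}_E$ for every passive linear transformation $U$ acting on the $n+k$ modes, and since the total photon-number operator commutes with such transformations, Schur-type arguments force the reduced state on Alice's (resp.\ Bob's) side to be a convex mixture of the maximally mixed states on each fixed-total-photon-number subspace $\mathrm{Sym}^N(\mathbb{C}^{n+k})$. This is a bosonic analogue of the de Finetti decomposition and, crucially, it tightly couples the photon-number statistics of any $k$ modes to those of the complementary $n$ modes, much more strongly than mere exchangeability would.

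Second, I would analyze the test $\mathcal{T}$. The test selects $k$ random mode pairs, heterodynes them on both sides, and uses a function of the outcomes --- essentially an average photon-number estimator corrected for the vacuum-noise bias --- to decide acceptance. The central estimate is a sampling-in-phase-space bound: by rotational invariance, the distribution of heterodyne outcomes on the $k$ tested modes is fully determined by the global total-photon-number profile via an explicit formula coming from the representation theory of the passive linear group. A Chernoff- or Markov-type concentration argument then shows that, conditionally on the estimator being below threshold, the total photon number on the remaining $n$ modes is at most some $N_{\max}$ except on an event of probability at most $\epsilon_1$, with $\epsilon_1$ decaying in $k$ and the gap between the threshold and the true mean.

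Third, I would turn this photon-number bound into a projection bound. If the total photon number on the $n$ unmeasured Alice modes is at most $N_{\max}$, then by Markov's inequality and a union bound over modes, the weight lying outside $\overline{\mathcal{H}}_A^{\otimes n}$ is controlled by a quantity of order $n N_{\max}/d_A$, and symmetrically for Bob. Combining the two with the step-two tail bound yields $\|(\mathrm{id}-\mathcal{P})\circ \mathcal{T}(\rho_{ABE})\|_1 \leq \epsilon$ with an explicit $\epsilon(n,k,d_A,d_B,\text{threshold})$ that can be made small by tuning $k$, $d_A$, $d_B$ upward.

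The main obstacle will be the sampling step. Unlike the permutation-invariant setting where a standard de Finetti argument closes the loop, here one needs a genuinely quantum concentration statement for heterodyne outcomes on a state that is invariant under the much larger phase-space rotation group. The difficulty is compounded by heterodyne being a continuous POVM whose outcome magnitudes bias the true photon number (vacuum contributes $+1$ to $|\alpha|^2$) and by the quantum fluctuations of photon number inside each irreducible sector. Rotational invariance, rather than mere permutation invariance, is the key resource: it pins down the joint heterodyne distribution on $k$ modes explicitly in terms of the global photon-number profile, enabling a tail bound tight enough to be useful for realistic $n$ and $k$ --- which is precisely the improvement over~\cite{RC09}.
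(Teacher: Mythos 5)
Your overall architecture matches the paper's: use the phase-space rotational invariance to force Bob's reduced state into a mixture of generalized Fock states $\sigma_p^{n+k}$, use the $k$ tested modes to infer a bound on the total photon number of the untested $n$ modes, and then convert that into a per-mode photon cutoff so that the projection $\mathcal{P}$ succeeds with high probability (the paper, incidentally, only needs this machinery for Bob: Alice's reduced state is a trusted i.i.d.\ thermal state, so her term in the union bound is handled directly). However, there are two genuine gaps. First, your step three is quantitatively too weak. From ``total photon number at most $N_{\max}=O(nd_0)$'' you invoke Markov plus a union bound over modes, giving a failure weight of order $N_{\max}/d_B$, so you would need $d_B = O(n d_0/\epsilon)$, i.e.\ a per-mode dimension growing linearly in $n$. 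This destroys the whole point of the construction: the postselection technique introduces a correction factor exponential in (a power of) $d_A d_B \log n$, which must be beaten by the $2^{-c\delta^2 n}$ collective-attack security, and that only works if $d_A d_B$ is polylogarithmic in $n/\epsilon$. The paper gets $d_B = O(\log(n/\epsilon))$ because, for the uniform mixture $\sigma_{nd}^n$ over Fock configurations with fixed total photon number, the occupation of a single mode has a geometric (thermal-like) tail; its Lemma \ref{step3bis} derives $\Pr[\max_i m_i \geq \log(2n/\epsilon)/\log(1+1/d)]\leq \epsilon$ by explicit counting of configurations, not by Markov. Without an exponential tail at this step your final $\epsilon$ is vacuous for realistic parameters.

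Second, the heart of the proof --- your step two --- is asserted rather than proved, and you yourself flag it as the main obstacle. The paper closes it in two separate moves that your plan does not supply: (i) a purely classical concentration statement (Lemma \ref{step1bis}/\ref{step1}): because the Q-function of a rotationally invariant state is isotropic, the heterodyne outcome vector is, conditionally on its norm, uniform on a sphere in $\mathbbm{R}^{2(n+k)}$, and Laurent--Massart $\chi^2$ tail bounds show that $Z_n \leq g(\delta) Y_k$ except with probability $\delta$; and (ii) the operator inequality $U_n \leq 2T_n$ (Lemma \ref{step2bis}), which compares the projector onto total photon number exceeding $nd_0$ with the heterodyne POVM element for $\sum_i |\alpha_i|^2 \geq nd_0$, and is what legitimately converts a statement about classical heterodyne data (including its vacuum-noise bias) into a statement about the quantum photon-number content of the untested modes. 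Your appeal to ``an explicit formula coming from the representation theory of the passive linear group'' plus a generic Chernoff/Markov argument names the right resource but does not produce either of these ingredients, so as it stands the proposal is a plausible outline of the theorem rather than a proof of it.
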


{\bf Description for the protocol with heterodyne detection}.--- Let us now consider a specific example of a QKD protocol $\mathcal{E}_0$. For the sake of clarity, we choose (arguably) the simplest one \cite{WLB04}. In the prepare-and-measure version of the protocol, Alice prepares $n$ coherent states which are modulated with a Gaussian distribution, and sent through the quantum channel. In the equivalent entangled version of the protocol, for which we prove security, Alice prepares $n$ two-mode squeezed vacuum states, measures one mode of each state with a heterodyne detection and sends the other one through the quantum channel. Bob then performs a heterodyne measurement of the states he receives. This means that he measures both quadratures $q$ and $p$ for each mode. This is achieved by sending the modes on a balanced beamsplitter and measuring the $q$ quadrature for one output mode and the $p$ quadrature for the other one. At the end of this process, Alice and Bob have access to two correlated vectors in $\mathbbm{R}^{2n}$, $\vec{x}_A$ for Alice and $\vec{x}_B$ for Bob. Then, they perform the reconciliation procedure \cite{JKL11} in order to extract a common string, and finally privacy amplification \cite{ren05} to distill their final secret keys, $S_A$ and $S_B$, respectively. 

This protocol is invariant under the action of conjugate passive symplectic operations (beamsplitters and phase shifts) because these correspond to some orthogonal transformation $R\in O(2n)$ of the quadratures in phase space. Specifically, if such an operation is applied, then Alice and Bob's vectors become $R \vec{x}_A$ and $R^T \vec{x}_B$ (see Appendix \ref{proof_symmetry} for details), meaning that the effect of the beamsplitters and phase shifts can be undone by simply applying the inverse rotation on the classical data. 

We assume in the following that the protocol $\mathcal{E}_0$ is secure against collective attacks, in the sense that for any pure state $\rho_{ABE} \in \mathcal{H}_A\otimes \mathcal{H}_B \otimes \mathcal{H}_E$ where $ \mathcal{H}_E \cong  \mathcal{H}_A\otimes \mathcal{H}_B$, the quantity $\left\| (\mathcal{E}_0-\mathcal{F}_0)\otimes \mathrm{id}_\mathcal{K} \left(\rho_{ABE}^{\otimes n}\right) \right\|_1$ can be made exponentially small in $n$, say $2^{-c\delta^2 n}$, at the price of reducing the secret key rate by an arbitrary small fraction $\delta$ compared to the asymptotic optimal rate, for some constant $c>0$. We note that despite being proven secure against collective attacks in the asymptotic limit  \cite{GC06,NGA06,LG10b}, the security of $\mathcal{E}_0$ for finite size attacks is not yet completely understood in the sense that the precise values of $c$ and $\delta$ are not currently known: this is due to the difficulty of estimating a covariance matrix in the finite-size regime (see \cite{LGG10}).

As we mentioned above, we will prove the security of a slightly modified protocol, noted $\mathcal{E}$ which starts with $n+k$ modes (instead of $n$ in the case of $\mathcal{E}_0$), $k$ of which being used to conduct a test $\mathcal{T}$. If the test passes, corresponding roughly to a scenario where the state does not contain too many photons, then Alice and Bob proceed with the protocol $\mathcal{E}_0$, otherwise they abort. 
The test $\mathcal{T}$ is in fact only applied to Bob's classical data. Indeed, we assume here that Alice prepares her state in a trusted environment meaning that her reduced state is an $(n+k)$-modal thermal state. Note that one could easily remove this assumption and also apply $\mathcal{T}$ to Alice's state. 

The test consists in first choosing a random rotation $R$ in $\mathbbm{R}^{2(n+k)}$ (with the appropriate measure) and applying it to the $2(n+k)$-dimensional vector corresponding to Bob's measurement outcomes (as well as to Alice's vector). 
Let us denote by $q_1, p_1, q_2, p_2, \cdots, q_k, p_k$ the first $2k$ coordinates of Bob's rotated vector and define the variable $Y_k :=  \sum_{i=1}^k (q_i^2 + p_i^2)$. The coordinates correspond to heterodyne measurements of $k$ modes of $\rho_B^{n+k}$ after being processed through an appropriate network of beamsplitters and phase-shifts (see Appendix \ref{proof_symmetry}). 
The test $\mathcal{T}$ is characterized by 2 parameters: a positive number $Y_\mathrm{test}$ and $k$. The test passes if $Y_k \leq Y_\mathrm{test}$ and fails otherwise. 
More precisely, because the test commutes with the measurement, it can equivalently be seen as  a map from $(\mathcal{H}_A \otimes \mathcal{H}_B)^{\otimes (n+k)}$ to $(\mathcal{H}_A \otimes \mathcal{H}_B)^{\otimes n}$ (plus an additional bit encoding whether the test passed or not) that returns the $n$ remaining modes when it passes and an ``abort'' state when it fails.

It is also useful to describe the CP map $\mathcal{P}$ characterized by three numbers, $n$, and the local dimensions $d_A$ and $d_B$. It corresponds to the binary outcome measurement in $(\mathcal{H}_A \otimes \mathcal{H}_B)^{\otimes n}$ described by the POVM $ \{ P_A^{\otimes n} \otimes P_B^{\otimes n}, \mathbbm{1} -  P_A^{\otimes n} \otimes P_B^{\otimes n} \}$ where $P_A$ and $P_B$ are the single-mode projectors on $\mathcal{H}_A$ and $\mathcal{H}_B$, respectively, defined as
$P_{A/B} = |0\rangle\!\langle 0| + |1\rangle\!\langle 1| + \cdots + |d_{A/B}-1\rangle\!\langle d_{A/B}-1|.$

In order to establish Theorem \ref{informal}, we will bound the probability $p_\mathrm{bad}$ of the following bad event: "the state passes the test and the projection onto $ P_A^{\otimes n} \otimes P_B^{\otimes n}$ fails" for some initial state $\rho_{AB}^n \in  (\mathcal{H}_A \otimes \mathcal{H}_B)^{\otimes n}$. Let us note $\tilde{\rho}_{AB}^n$ the unnormalized state after the test when it passed; the probability of passing the test is simply $p_\mathrm{test} = \mathrm{tr} \tilde{\rho}_{AB}^n$ and $p_\mathrm{bad} = \mathrm{tr}\left[ (1-\mathcal{P})\circ \mathcal{T} \left(\rho_{AB}^{n+k}\right)\right]$.
One can bound $p_\mathrm{bad}$ in the following way:
\begin{eqnarray}
p_\mathrm{bad} &=&\mathrm{tr}\left(\mathrm{id}_{AB}-  P_A^{\otimes n} \otimes P_B^{\otimes n}\right)\tilde{\rho}_{AB}^n \nonumber\\
& \leq & \mathrm{tr} \left[\left( \mathrm{id}_A- P_A^{\otimes n} \right) \tilde{\rho}_{A}^n\right]+  \mathrm{tr} \left[\left( \mathrm{id}_B- P_B^{\otimes n} \right) \tilde{\rho}_{B}^n\right] \nonumber\\
& \leq & \mathrm{tr} \left[\left( \mathrm{id}_A- P_A^{\otimes n} \right) {\rho}_{A}^n\right]+  \mathrm{tr} \left[\left( \mathrm{id}_B- P_B^{\otimes n} \right) \tilde{\rho}_{B}^n\right] \label{bound_bob}
\end{eqnarray}
where we used the union bound and the fact that Alice does apply the test. 
The first term is easy to compute because the state of Alice, a multimode thermal state, is well known: $\rho_A^n =  \rho_\mathrm{thermal}^{\otimes n}$ with $\rho_\mathrm{thermal} = \sum_{k=0}^\infty \frac{\lambda^k}{(1+\lambda)^{k+1}} |k\rangle\!\langle k|$ for a state with $\lambda$ photons per mode. The value of $\lambda$ is a parameter of the protocol and should be optimized given the expected characteristics of the quantum channel. 
The union bound gives
\begin{equation*}
1- \mathrm{tr} \left(P_A^{\otimes n} \rho_A^n \right) \leq n (1-  \mathrm{tr} \left(P_A  \rho_\mathrm{thermal} \right))=n\left(  \frac{\lambda}{1+\lambda}\right)^{d_A}. 
\end{equation*}
In particular, choosing $d_A =\frac{ \log (n/\epsilon_A) }{\log (1+1/\lambda)}$ for the dimension of Alice's Hilbert space leads to $1 - \mathrm{tr}\left(P_A^{\otimes n}  \rho_{A}^n\right)  \leq \epsilon_A$. 

Bounding the second term in Eq.\ (\ref{bound_bob}) is much trickier because one cannot assume that Bob's state $\rho_B^n$ is Gaussian or that it even has an i.i.d. structure. This is because it corresponds to the output of the unknown quantum channel controlled by Eve. Here, we will make use of the specific symmetries of the QKD protocol in phase space in order to simplify greatly the problem. 
In general, most protocols are invariant under permutations of the subsystems of Alice and Bob. This means that the state $\rho_{AB}^n$ (and therefore also $\rho_B^n$) can be assumed to display this invariance. However, CVQKD protocols such as the one considered here respect a much stronger symmetry: they are invariant when Alice and Bob apply to their respective $(n+k)$ modes conjugate passive linear transformations, implemented by any network of beamsplitters and phase shifts \cite{LKG09,LG10b} (see Appendix \ref{proof_symmetry} for details). Here, it is crucial that the test $\mathcal{T}$ respects the symmetry, and this can be enforced \emph{at the level of classical data} by the choice of the random subspace $T$ of $\mathbbm{R}^{2(n+k)}$ (as explained before). 

Thanks to this symmetry, one can assume that the state $\rho_B^{n+k}$ of Bob (before applying the test $\mathcal{T}$) is rotationally invariant, that is, left invariant under the action of any network of passive linear operations on their $n+k$ modes. Such states were already studied in Ref.\ \cite{LC09} where a de Finetti theorem was established: if sufficiently many modes of $\rho_B^{n+k}$ are traced out, then the remaining state is close to a mixture of thermal states. Intuitively, one then expects that the second term of Eq.\ \ref{bound_bob} behaves like the first one, and this is what we prove rigorously. 
Before we explain how to bound $\mathrm{tr}\left(P_B^{\otimes n}  \tilde{\rho}_{B}^n\right)$, we recall two useful properties of states, such as $\rho_B^{n+k}$, which are rotationally invariant \cite{LC09}. First, these states are mixtures of generalized $(n+k)$-mode Fock states $\sigma_p^{n+k} := 1 / {n+k+p-1 \choose p} \sum_{p_1 + \cdots + p_m=p} |p_1, p_2, \cdots, p_m\rangle \!\langle p_1, p_2, \cdots, p_m| $, where $|p_1, \cdots, p_m \rangle$ is the product of Fock states with $p_1$ photons in the first mode, $p_2$ photons in the second mode, etc, and the sum is taken over all states with a total number of $p$ photons in $n+k$ modes. This means that there exist $\lambda_0 \geq 0 , \lambda_1 \geq 0, \cdots$ such that $\rho_B^{n+k} = \sum_{p=0}^\infty \lambda_p \sigma_p^{n+k}$.
The second useful property is that the Wigner function $W(q_1, p_1, \cdots, q_{n+k}, p_{n+k})$ of $\rho_B^{n+k}$ is isotropic, that is only depending on the norm of the vector $(q_1, p_1, \cdots, q_{n+k}, p_{n+k})$.
 The same also holds for the Q-function of the state, that is the probability distribution of the outcomes of the heterodyne measurements. 

Let us introduce another random variable $Z_n := 1/(2n) \sum_{i=1}^n q_{k+i}^2 + p_{k+i}^2$, corresponding to the norm of Bob's heterodyne measurements for the $n$ modes of  $\rho_B^{n+k}$ not measured during the test $\mathcal{T}$. 
We show in the appendix that the probability $\epsilon_\mathrm{test}$ of passing the test but $Z_n$ being much larger than $Y_\mathrm{test}$ is exponentially small in $k$ when the value of $Y_\mathrm{test}$ is chosen slightly larger the expected variance of Bob's measurement results (see Lemma \ref{step1bis}). 
In turn, this implies that the \emph{total} number of photons in the state $\rho_B^n$ is bounded by $O(n Y_\mathrm{test})$ (see Lemma \ref{step2bis}). Finally, we show that the projection over the space $\overline{\mathcal{H}}_B^{\otimes n}$ succeeds with high probability if $d_B = \mathrm{dim} \,\overline{\mathcal{H}}_B = O\left(\log\frac{2n}{\epsilon}\right)$ (see Lemma \ref{step3bis}). 
This finally provides a bound on $ || (1- \mathcal{P}) \circ \mathcal{T}||_\diamond$ and proves Theorem \ref{informal}.

We now put things together and establish that protocol $\mathcal{E}$ is secure against general attacks. First, choosing $d_A$ and $d_B$ on the order of $O(\log (n/\epsilon_\mathrm{test}))$, one obtains $ ||(1- \mathcal{P}) \circ \mathcal{T}||_\diamond \leq \epsilon_\mathrm{test}$.
Second, assuming that  the original protocol $\mathcal{E}_0$ is secure against collective attacks, the diamond norm $ ||\tilde{\mathcal{E}} - \tilde{\mathcal{F}}||_\diamond$ can be bounded by $2^{-c \delta^2 n + O\left(\log^2(n/\epsilon_\mathrm{test})\right)} $ using the postselection technique where the dimension of the relevant Hilbert space $\overline{\mathcal{H}}_A\otimes \overline{\mathcal{H}}_B$ is $d_A d_B = O\left(\log^2(n/\epsilon_\mathrm{test})\right)$ (see \cite{CKR09} for details). 
This shows that protocol $\mathcal{E}$ is $\epsilon$-secure against general attacks with 
\begin{equation}
\epsilon=2^{-c \delta^2 n + O\left(\log^2(n/\epsilon_\mathrm{test})\right)} + 2 \epsilon_\mathrm{test}.
\end{equation}

{\bf Conclusion}.---We have proved that Gaussian continuous-variable QKD protocols, using a Gaussian distribution of coherent states and homodyne or heterodyne measurements, are secure against arbitrary attacks. Our proof exploits the specific symmetries in phase-space of Gaussian QKD protocols and uses a simple test to ensure that the global state shared between Alice and Bob is well described by assigning a low dimensional Hilbert space to each mode.
This allows one to use the postselection technique introduced in Ref.\ \cite{CKR09} for discrete-variable protocols.
Our result greatly improves on a previous one using a de Finetti theorem which could not be applied to prove the security of protocols in realistic experimental implementations. 
Finally, our analysis indicates that in order to prove the security of any QKD protocol, one should exploit all the available symmetries of the protocol, beyond the traditional permutation.

{\bf Acknowledgements}.--- This work was supported by the SNF through the National Centre of Competence in Research ``Quantum Science and Technology'' (grant No.\  200020- 135048), the European Research Council (grant No.\ 258932), the Humboldt foundation and the F.R.S.-FNRS under project HIPERCOM.

\newpage

%%%%%%%%%%%%%%%%%%%%%%%%%%%%%%%%%%%%%%%%%%%%%%%%
%%%%%%%%%%%%%%%%%%%%%%%%%%%%%%%%%%%%%%%%%%%%%%%%
%%%%%%%%%%%%%%%%%%%%%%%%%%%%%%%%%%%%%%%%%%%%%%%%
%%%%%%%%%%%%%%%%%%%%%%%%%%%%%%%%%%%%%%%%%%%%%%%%
%%%%%%%%%%%%%%%%%%%%%%%%%%%%%%%%%%%%%%%%%%%%%%%%
%%%%%%%%%%%%%%%%%%%%%%%%%%%%%%%%%%%%%%%%%%%%%%%%
%%%%%%%%%%%%%%%%%%%%%%%%%%%%%%%%%%%%%%%%%%%%%%%%
%%%%%%%%%%%%%%%%%%%%%%%%%%%%%%%%%%%%%%%%%%%%%%%%
%%%%%%%%%%%%%%%%%%%%%%%%%%%%%%%%%%%%%%%%%%%%%%%%
%%%%%%%%%%%%%%%%%%%%%%%%%%%%%%%%%%%%%%%%%%%%%%%%
\begin{widetext}
\newpage

\appendix
\section*{Appendix}

In this appendix, we detail the various technical results used in the main text. 
In Appendix \ref{proof_hetero}, we explicitly state our main theorem for the continuous-variable protocol where Bob uses heterodyne detection. The proof of the main theorem uses three lemmas which are established in Appendices \ref{proof1}, \ref{proof2} and \ref{proof3}.
In Appendix \ref{proof_symmetry}, we detail why the protocol is invariant under the action of a network of beamsplitters and phase shifts, justifying the symmetry assumption made on Bob's quantum state. 

The rest of the appendix is devoted to protocols where Bob performs a homodyne detection instead of a heterodyne one. We state our main theorem in that case in Appendix \ref{proof_homo}. We prove in Appendix \ref{protocol_homodyne} that Bob's state can again be considered invariant under the action of beamsplitters and phase shifts. Our main theorem uses two of the same lemmas as in the heterodyne case and a variant of the third one which is established in Appendix \ref{proof2ter}.

\section{Main theorem for the heterodyne protocol}
\label{proof_hetero}

In order to make use of the relevant symmetries in phase space, the test itself should be invariant under the application of an arbitrary network of beamsplitters and phase-shifts on Bob's $(n+k)$ modes before he proceeds with his measurement. 
This can be enforced by actively symmetrizing the state, which can be done at the level of classical data (see Ref.\ \cite{lev12} for a discussion on this active symmetrization).
 
Bob randomly chooses a random unitary $U$ from the Haar measure on the unitary group $U(n+k)$.
Then, he symmetrizes his state thanks to the network of beamsplitters and phase shifts applying the transformation $U$ on the annihilation $(b_1, \cdots, b_{n+k})$ and creation operators $(b_1^\dagger, \cdots, b_{n+k}^\dagger)$ of his $(n+k)$ modes through 
\begin{equation}
b_i \rightarrow \sum_{j=1}^{n+k} U_{i,j} b_j \quad \text{and} \quad b_i^\dagger \rightarrow \sum_{j=1}^{n+k} U_{i,j}^* b_j^\dagger
\end{equation}
and finally measures his $(n+k)$ modes with a heterodyne detection. The state $\rho^{\otimes (k+n)}$ held by Bob after this symmetrization is called rotationally invariant. 

Crucially, Bob can also first measure his state with a heterodyne detection and only then implement $U$ by applying the symplectic transformation $S$ given by
\begin{equation}
S := \begin{pmatrix}
\mathrm{Re}(U) & - \mathrm{Im}(U) \\
\mathrm{Im}(U) & \mathrm{Re}(U)
\end{pmatrix}
\end{equation}
to his classical vector of measurements. This is true because the symmetrization in phase-space commutes with the heterodyne measurement (see Appendix \ref{proof_symmetry}  for details).

We denote $(q_1, p_1, \cdots, q_{n+k}, p_{n+k})$ the classical vector Bob obtains after this procedure. Thanks to the symmetrization, without loss of generality, the test $\mathcal{T}$ can be applied to the first $k$ modes, that is to the data $(q_1, p_1, \cdots, q_k, p_k)$. 

We now state our main theorem for the protocol with heterodyne detection. 
\begin{theo}[Heterodyne protocol]
\label{theo_hetero}
Let $\epsilon, Y_\mathrm{test} >0$ be fixed parameters. Let $Y_k = \frac{1}{k} \sum_{i=1}^k (q_i^2 + p_i^2)$ be the average of Bob's (squared) heterodyne measurement outcomes on the first $k$ modes of his state after symmetrization, and let $\rho^n$ be the state of the $n$ remaining modes. 
Let $d_B :=\frac{\log \left(4n/\epsilon\right)}{\log (1+1/d_0)}$ where $d_0 := g\left(\frac{\epsilon}{4}\right) Y_\mathrm{test}$ (and $g$ is defined in Eq.\ \ref{def_g}) and let $\overline{\mathcal{H}}_B = \mathrm{Span} \left\{ |0\rangle, \cdots, |d_B-1\rangle\right\}$ be the finite dimensional Hilbert space spanned by states with less than $d_B$ photons.
Then the probability that $Y_k \leq Y_\mathrm{test}$ and that the projection of $\rho^n$ onto $\overline{\mathcal{H}}_B^{\otimes n}$ fails is less than $\epsilon$. 
\end{theo}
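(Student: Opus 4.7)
The plan is to decompose the bad event (test passes AND projection fails) via two intermediate events and bound each piece with one of the three lemmas signposted in the main text. Let $A$ denote the event $\{Y_k \leq Y_{\text{test}}\}$, $B$ the event $\{Z_n \leq d_0\}$ with $d_0 = g(\epsilon/4)\, Y_{\text{test}}$, and $C$ the event that the projection onto $\overline{\mathcal{H}}_B^{\otimes n}$ succeeds. By a union bound, $\Pr[A \cap \bar C] \leq \Pr[A \cap \bar B] + \Pr[B \cap \bar C]$, so it suffices to bound each piece by $\epsilon/2$.

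For the first piece I would invoke Lemma \ref{step1bis} and exploit the rotational invariance of $\rho_B^{n+k}$ established via the active symmetrization: the joint $Q$-function of the $2(n+k)$ heterodyne outcomes depends only on the total Euclidean norm of the outcome vector, so conditionally on that norm the outcomes are uniformly distributed on a sphere of the corresponding radius. The fraction of total squared norm carried by the first $2k$ coordinates is then Beta-distributed, concentrated around $k/(n+k)$. A Laurent--Massart type tail bound (see \cite{LM00}) applied both to $Y_k$ and to the total norm then yields $\Pr[A \cap \bar B] \leq \epsilon/2$ for an appropriate choice of the concentration function $g$ in Eq.\ \ref{def_g}, with the slack shrinking as $k \to \infty$.

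For the second piece I would combine Lemmas \ref{step2bis} and \ref{step3bis}. Lemma \ref{step2bis} converts the classical bound $Z_n \leq d_0$ into an operator-level bound on the total mean photon number of the post-test state on the remaining modes, essentially via the identity $\int (q^2+p^2)\, Q_\tau(q,p)\, dq\, dp = 2\langle a^\dagger a\rangle_\tau + 2$ for a single-mode state $\tau$, combined with the rotational invariance that makes all $n+k$ modes exchangeable in distribution. Lemma \ref{step3bis} then applies a union bound over the $n$ remaining modes: among states of a given mean photon number $\bar n$, the thermal state maximises the Fock-tail probability, which decays geometrically as $\left(\bar n/(1+\bar n)\right)^{d_B}$. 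The specific choice $d_B = \log(4n/\epsilon)/\log(1+1/d_0)$ in the theorem statement is exactly what is needed so that $n$ times this thermal tail is at most $\epsilon/2$.

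The main obstacle I expect is the first step: translating the operator-level rotational invariance of $\rho_B^{n+k}$ into a classical concentration statement that remains usable at the operator level in the subsequent projection estimate. Because heterodyne is a POVM and not a projective measurement, the event ``test passes'' cannot simply be conditioned on in the usual classical sense; one must phrase Lemma \ref{step1bis} as a trace inequality for the unnormalised post-test state $\tilde\rho^n$, and verify that the sphere/Beta concentration survives the interplay between the projection $\mathcal{P}$ acting on the $n$ kept modes and the heterodyne on the $k$ tested modes. Once this quantum-to-classical dictionary is cleanly set up through the fact that the test commutes with the projection (both are diagonal in a rotationally-symmetric basis), the remaining calculations reduce to standard tail bounds.
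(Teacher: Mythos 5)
Your first half is essentially the paper's: the event $\{Y_k\le Y_{\mathrm{test}}\}\cap\{Z_n> d_0\}$ is controlled by sphere concentration (Lemma \ref{step1bis}, proved via Laurent--Massart $\chi^2$ bounds, which is the same Beta/sphere statement you describe). The genuine gap is in your second half, and it is twofold. First, your union-bound decomposition $\Pr[A\cap\bar C]\le\Pr[A\cap\bar B]+\Pr[B\cap\bar C]$ treats $B=\{Z_n\le d_0\}$ and $\bar C$ (projection onto $\overline{\mathcal{H}}_B^{\otimes n}$ fails) as jointly defined classical events, but $Z_n$ is a heterodyne-outcome functional on the \emph{same} $n$ kept modes on which the Fock projection acts, and these two measurements do not commute; $\Pr[B\cap\bar C]$ is not well defined without fixing an ordering, and you never supply the operator-level statement that would replace it. The paper avoids this precisely by never using the classical event $B$: it introduces the Fock-diagonal projector $U_n$ (total photon number $> nd_0$), which commutes with $V_n=\mathbbm{1}-P_B^{\otimes n}$, writes the operator inequality $V_n\le V_n(\mathbbm{1}-U_n)+U_n$, and then bridges to the heterodyne statistics through the operator inequality $U_n\le 2T_n$ of Lemma \ref{step2bis}, where $T_n$ is the heterodyne POVM element for outcomes of norm $\ge nd_0$.

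Second, your proposed substitute for that bridge is too weak. You read Lemma \ref{step2bis} as converting $Z_n\le d_0$ into a bound on the \emph{mean} photon number via $\int(q^2+p^2)Q_\tau = 2\langle a^\dagger a\rangle+2$, and then invoke "thermal states maximize the Fock tail at fixed mean photon number." A mean-photon-number bound only yields Markov-type tails, and the extremality claim is false (e.g.\ $(1-p)|0\rangle\langle 0|+p\,|d_B\rangle\langle d_B|$ with $p\,d_B=\bar n$ has tail $\bar n/d_B$, far exceeding $(\bar n/(1+\bar n))^{d_B}$), so the geometric decay needed to make $d_B=O(\log(n/\epsilon))$ work cannot be reached along this route. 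What the paper actually uses is the full-distribution statement $U_n\le 2T_n$ (weight on total photon number above $nd_0$ is at most twice the heterodyne tail probability), combined with rotational invariance, which forces the post-cut state to be a mixture of generalized Fock states $\sigma_p^n$ with $p\le nd_0$; Lemma \ref{step3bis} is then a purely combinatorial (Stirling) estimate for the per-mode occupation of such $\sigma_p^n$, not a thermal-extremality argument. To repair your proof you would need to restore both ingredients: the commuting Fock-diagonal intermediate operator in place of the classical event $B$, and an operator (or at least distributional) version of the photon-number cut rather than a first-moment bound.
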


In order to prove Theorem \ref{theo_hetero}, we need to introduce some operators acting on some subspace of $\mathcal{H}_B^{\otimes (n+k)}$. To keep notation simple, we use the subscript $k$ (resp. $n$) when the operator acts on $\mathcal{H}_B^{\otimes k}$ (resp. $\mathcal{H}_B^{\otimes n}$) corresponding to the first $k$ modes (resp. the last $n$ modes) of the symmetrized state. 
Let us define the POVM elements $T_k$, $T_n$, $U_n$ and $V_n$ on $\mathcal{H}^{\otimes n}$ as follows:
\begin{itemize}
\item $\mathcal{T}_k$ acting on $\mathcal{H}^{\otimes k}$ corresponding to a failed test, meaning that the value of the observable $Y_k$ is larger than $T_\mathrm{test}$:
\begin{equation}
\mathcal{T}_k := \frac{1}{\pi^k} \int_{\sum_{i=1}^k |\alpha_i|^2 \geq Y_\mathrm{test}} |\alpha_1\rangle\!\langle\alpha_1| \cdots |\alpha_k\rangle\!\langle \alpha_k| \mathrm{d}\alpha_1 \cdots \mathrm{d} \alpha_k, 
\end{equation}
\item $T_n$ is the projector onto products of coherent states $|\alpha_1\rangle \cdots |\alpha_n\rangle \in \mathcal{H}_B^{\otimes n}$ such that $\sum_{i=1}^n |\alpha_i|^2 \geq n d_0$:
\begin{equation}
T_n := \frac{1}{\pi^n} \int_{\sum_{i=1}^n |\alpha_i|^2 \geq n d_0} |\alpha_1\rangle\!\langle\alpha_1| \cdots |\alpha_n\rangle\!\langle \alpha_n| \mathrm{d}\alpha_1 \cdots \mathrm{d} \alpha_n,
\end{equation}
\item $U_n$ is the projector onto the subspace of $\mathcal{H}^{\otimes n}$ spanned by states with more than $n d_0$ photons
\begin{equation}
U_n := \sum_{m= n d_0+1}^\infty \Pi_m^n,
\end{equation}
where we introduced the projector $\Pi_m^n$ on the subspace spanned by $n$-mode states containing $m$ photons:
\begin{equation}
\Pi_m^n =\sum_{m_1+ \cdots+ m_n=m} |m_1 \cdots m_n\rangle \!\langle m_1\cdots m_n|.
\end{equation}
\item $V_n$ is the projector onto the subspace of $\mathcal{H}^{\otimes n}$ such that at least one mode contains at least $d_B$ photons:
\begin{equation}
V_n :=  \mathbbm{1}-P_B^{\otimes n},
\end{equation}
where $P_B := |0\rangle \!\langle 0| + \cdots + |d_B-1\rangle\! \langle d_B-1|$.
\end{itemize}
With these notations, Theorem \ref{theo_hetero} simply gives an upper bound on $p_\mathrm{bad} :=   \mathrm{tr} \left[ V_n (1-\mathcal{T}_k)  \rho^{n+k} \right]$ for any rotationally invariant state $\rho^{n+k} \in \mathcal{H}_{B}^{\otimes (n+k)}$. The quantity  $p_\mathrm{bad}$ is the probability that the state passes the test and that the projection on the finite-dimensional subspace $\overline{\mathcal{H}}_B^{\otimes n}$ fails. 

The proof of the theorem uses the following variants of three technical lemmas proven in Sections \ref{proof1}, \ref{proof2} and \ref{proof3} of this appendix. 
We first state Lemma \ref{step1bis} which is a corollary of a result proven in Section \ref{proof1}.
\begin{lemma}
\label{step1bis}
Let $\mathbf{X} = (X_1, \cdots, X_{k+n})$ be a vector of  $\mathbbm{C}^{n+k}$. Let $U$ be a random unitary transformation of $U(n+k)$ drawn from the Haar measure, and define $\mathbf{Y} = U \mathbf{X}$. Then
\begin{equation}
\mathrm{Pr}\left[\frac{1}{n} \sum_{i=1}^n |Y_{k+i}|^2  \geq g(\delta)\frac{1}{k} \sum_{i=1}^k |Y_k|^2 \right] \leq \delta
\end{equation}
where 
\begin{equation}
g(\delta) = \frac{  1 + 2\sqrt{\frac{ \log (1/\delta)}{2n}} +\frac{2  \log (2/\delta)}{2n} }{1 - \sqrt{\frac{2}{k} \log \left(\frac{2}{\delta} \right)}} \label{def_g}.
\end{equation}
\end{lemma}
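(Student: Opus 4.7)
The plan is to reduce this Haar-random statement to a standard concentration inequality for chi-squared distributions, in two essentially automatic steps.

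First, I would note that the ratio inside the probability is scale-invariant in $\mathbf{X}$, and that by left-invariance of the Haar measure on $U(n+k)$ the distribution of $\mathbf{Y}=U\mathbf{X}$ depends on $\mathbf{X}$ only through its Euclidean norm. Hence, without loss of generality, $\mathbf{X}=e_1$ and $\mathbf{Y}$ is the first column of a Haar-random unitary, i.e.\ uniformly distributed on the complex sphere $S^{2(n+k)-1}$. I would then realise this uniform distribution through i.i.d.\ standard complex Gaussians $G_1,\ldots,G_{n+k}$ via $Y_i=G_i/\|G\|$, so that the normalising factor $\|G\|^2$ cancels in numerator and denominator and the ratio of interest has the same distribution as
\begin{equation}
\frac{A/n}{B/k}, \quad A := \sum_{i=k+1}^{n+k} |G_i|^2, \quad B := \sum_{i=1}^{k} |G_i|^2,
\end{equation}
where $A$ and $B$ are independent and $2A$, $2B$ are chi-squared with $2n$ and $2k$ degrees of freedom respectively, with means $n$ and $k$.

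Next, I would apply the Laurent--Massart tail bounds \cite{LM00} to the upper tail of $A$ and the lower tail of $B$ separately. With the deviation parameter $x=\log(2/\delta)$ these give
\begin{equation}
\mathrm{Pr}\bigl[A \geq n + \sqrt{2nx} + x\bigr] \leq \delta/2, \qquad \mathrm{Pr}\bigl[B \leq k - \sqrt{2kx}\bigr] \leq \delta/2,
\end{equation}
and a union bound yields, on the intersection of the two complementary events (which holds with probability at least $1-\delta$), the bound $(A/n)/(B/k) \leq g(\delta)$ with precisely the rational function of square roots and logs that appears in the statement.

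The argument has no real conceptual obstacle; the main care required is bookkeeping. One must track the normalisation of the complex Gaussian so that $\mathbb{E}[|G_i|^2]=1$, note that the modulus squared of a complex Gaussian is a sum of two squared real Gaussians (hence the doubling of degrees of freedom in the chi-squared), and split $\delta$ appropriately between the two tail events so that the constants inside the logarithms and the coefficients of the square-root terms come out matching those in $g(\delta)$.
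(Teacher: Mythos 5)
Your proof is correct and follows essentially the same route as the paper: reduce to a uniformly distributed vector on the sphere, realise it via i.i.d.\ Gaussians so the ratio becomes a ratio of independent $\chi^2$ variables (with $2n$ and $2k$ degrees of freedom in the complex case), and apply the Laurent--Massart tail bounds with $x=\log(2/\delta)$ plus a union bound; this reproduces the stated $g(\delta)$ (the paper's $\log(1/\delta)$ in the numerator is evidently a typo for $\log(2/\delta)$). The only cosmetic difference is that you work directly with complex Gaussians, whereas the paper first proves the real-sphere version (Lemma \ref{step1}) and then invokes it as a special case.
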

By construction, the vector $Y = (Y_1, \cdots, Y_{n+k})$ is uniformly distributed on the complex sphere in $\mathbbm{C}^{n+k}$ with radius $||Y||$, and consequently, the real vector $(\mathrm{Re}(Y_1), \mathrm{Im}(Y_1), \cdots, \mathrm{Re}(Y_{n+k}), \mathrm{Im}(Y t_{n+k}))$ is uniformly distributed on the corresponding real sphere in $\mathbbm{R}^{2(n+k)}$. Lemma \ref{step1bis} is then a special case of Lemma \ref{step1}.

The following lemma is proved in Section \ref{proof2}. 
\begin{lemma}
\label{step2bis}
\begin{equation}
U_n \leq 2 T_n.
\end{equation}
\end{lemma}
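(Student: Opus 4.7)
My plan is to exploit that both $T_n$ and $U_n$ commute with every individual number operator $\hat n_i$, which reduces the operator inequality to a family of scalar inequalities indexed by Fock states. Commutation with $U_n$ is immediate since it is a spectral projector of $\hat n_1 + \cdots + \hat n_n$. For $T_n$, conjugation by $e^{i\theta\hat n_j}$ sends $|\alpha_j\rangle\!\langle\alpha_j|$ to $|e^{i\theta}\alpha_j\rangle\!\langle e^{i\theta}\alpha_j|$, and the change of variable $\alpha_j\mapsto e^{-i\theta}\alpha_j$ preserves both $d^2\alpha_j$ and the integration domain $\{\sum_i|\alpha_i|^2\geq nd_0\}$, so $T_n$ is invariant. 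Hence $T_n$ and $U_n$ are simultaneously diagonal in the Fock basis $\{|m_1,\ldots,m_n\rangle\}$, and the operator inequality is equivalent to the statement that every diagonal entry of $U_n$ is at most twice the corresponding diagonal entry of $T_n$.

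Writing $p:=m_1+\cdots+m_n$, the diagonal entry of $U_n$ on $|m_1,\ldots,m_n\rangle$ equals $1$ if $p>nd_0$ and $0$ otherwise, so only the case $p>nd_0$ requires work: I need to show $\langle T_n\rangle\geq 1/2$ there. Using $|\langle m|\alpha\rangle|^2=e^{-|\alpha|^2}|\alpha|^{2m}/m!$ and the radial substitution $r_i:=|\alpha_i|^2$ (which absorbs the $\pi^{-n}$ against the $n$ angular integrations), the diagonal entry of $T_n$ becomes
\begin{equation}
\int_{r_1+\cdots+r_n\geq nd_0}\prod_{i=1}^n\frac{r_i^{m_i}e^{-r_i}}{m_i!}\,dr_1\cdots dr_n = \Pr\!\Bigl[\textstyle\sum_{i=1}^n R_i\geq nd_0\Bigr],
\end{equation}
where the $R_i$ are independent with $R_i\sim\mathrm{Gamma}(m_i+1,1)$. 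By additivity of shape parameters, $S:=\sum_i R_i\sim\mathrm{Gamma}(p+n,1)$, so the task reduces to showing $\Pr[S\geq nd_0]\geq 1/2$, i.e.\ that $nd_0$ lies at or below the median of $\mathrm{Gamma}(p+n,1)$.

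I would conclude by invoking the standard bound that the median $\nu_k$ of $\mathrm{Gamma}(k,1)$ (for positive integer $k$) satisfies $\nu_k\geq k-\ln 2$; a compact route is the Gamma-Poisson duality $\Pr[\mathrm{Gamma}(k,1)\geq t]=\Pr[\mathrm{Pois}(t)\leq k-1]$ combined with a bound on the median of the Poisson law. Applied with $k=p+n$ and using $p>nd_0$ together with $n\geq 1>\ln 2$, this gives $\nu_{p+n}\geq p+n-\ln 2 > nd_0$, so $\Pr[S\geq nd_0]\geq 1/2$ as needed. The main obstacle is really just this Gamma-median bound; everything else is an algebraic unpacking of definitions once the shared phase-rotation symmetry of $T_n$ and $U_n$ has been identified. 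If one wished to avoid any external citation, the cruder estimate $\nu_k\geq k-1$ already suffices here and can be verified by a short direct computation, since we only need $nd_0\leq p+n-1$ which follows from $p\geq nd_0+1$ and $n\geq 1$.
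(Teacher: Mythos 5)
Your proposal is correct and takes essentially the same route as the paper: both use invariance under phase rotations to reduce the operator inequality to the Fock-diagonal entries of $T_n$, identify those entries as tail probabilities of Gamma distributions (the paper does this via the explicit integrals $J_n(k,nd_0)$ of its Lemma \ref{prel1}), and conclude from the Gamma-median fact $\Pr[\mathrm{Gamma}(k,1)\geq k-1]\geq 1/2$, which is exactly the bound $\Gamma(x+1,x)/\Gamma(x+1,0)\geq 1/2$ that the paper also invokes without proof. Your probabilistic bookkeeping (keeping the full shape $p+n$ rather than dropping the extra terms) is marginally cleaner, but the substance of the argument is the same.
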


Our final lemma quantifies the maximum number of photons in a single mode for a rotationally invariant state with $n d$ photons in $n$ modes, except with a small probability:
\begin{lemma}
\label{step3bis}
Let $m_1, \cdots, m_n$ be the random variables corresponding to photon counting measurements of the $n$ modes of the state $\sigma_{n d}^n$ which is a uniform mixture of states with $n d$ photons in $n$ modes. Then, the following bound holds:
\begin{equation}
\mathrm{Pr}\left[ \max_{i=1\cdots n} m_i \geq  \frac{\log \left(\frac{2n}{\epsilon}\right)}{\log (1+1/d)} \right] \leq \epsilon.
\end{equation}
\end{lemma}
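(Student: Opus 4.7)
My plan is to apply a union bound over the $n$ modes and then estimate the single-mode tail via a direct combinatorial computation.

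By definition $\sigma_{nd}^n$ is the uniform mixture of $n$-mode Fock states $|p_1,\ldots,p_n\rangle$ with $p_1+\cdots+p_n = nd$, so a joint photon-number measurement yields an $n$-tuple $(m_1,\ldots,m_n)$ distributed uniformly over the $\binom{nd+n-1}{n-1}$ compositions of $nd$ into $n$ non-negative parts. By symmetry across modes and the union bound,
\begin{equation*}
\mathrm{Pr}\!\left[\max_i m_i \geq M\right] \leq n\,\mathrm{Pr}[m_1 \geq M],
\end{equation*}
so it suffices to show $\mathrm{Pr}[m_1 \geq M] \leq \epsilon/n$ for the threshold $M = \log(2n/\epsilon)/\log(1+1/d)$.

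The event $\{m_1 \geq M\}$ is in bijection (via subtracting $M$ from the first coordinate) with the compositions of $nd-M$ into $n$ non-negative parts, yielding the exact formula
\begin{equation*}
\mathrm{Pr}[m_1 \geq M] = \frac{\binom{nd+n-1-M}{n-1}}{\binom{nd+n-1}{n-1}} = \prod_{j=1}^{n-1}\left(1-\frac{M}{nd+j}\right),
\end{equation*}
where the product follows by cancellation in the factorials (and the probability is trivially $0$ for $M>nd$). Applying $1-x \leq e^{-x}$ in each factor together with the integral lower bound $\sum_{j=1}^{n-1}1/(nd+j) \geq \int_1^n du/(nd+u) = \log\bigl((nd+n)/(nd+1)\bigr)$ produces the geometric tail estimate
\begin{equation*}
\mathrm{Pr}[m_1 \geq M] \leq \left(\frac{nd+1}{nd+n}\right)^{\!M} = \left(\frac{d}{d+1}\right)^{\!M}\!\left(1+\frac{1}{nd}\right)^{\!M},
\end{equation*}
using the elementary identity $(nd+1)/(nd+n)=(d/(d+1))(1+1/(nd))$.

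For the stated $M$, the factor $(d/(d+1))^M$ equals $\epsilon/(2n)$ exactly, so closing the argument reduces to absorbing the correction $(1+1/(nd))^M$ into the remaining safety factor of $2$. Since $\log(1+1/(nd)) \leq 1/(nd)$, this amounts to verifying $M \leq nd\log 2$, a condition satisfied in any non-degenerate regime; if instead $M>nd$, the event is empty because $m_i \leq nd$ deterministically. The union bound then delivers $\mathrm{Pr}[\max_i m_i \geq M] \leq n\cdot 2\cdot\epsilon/(2n) = \epsilon$. The main technical subtlety is precisely this slack between the clean rate $\log(1+1/d)$ written in the statement and the slightly smaller integral-based rate $\log((nd+n)/(nd+1)) = \log(1+1/d) - \log(1+1/(nd))$; the factor of $2$ built into $\log(2n/\epsilon)$ rather than $\log(n/\epsilon)$ is exactly the safety margin designed to absorb this discrepancy.
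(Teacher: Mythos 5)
Your overall strategy is the same as the paper's: a union bound over the $n$ modes followed by an estimate of the single-mode tail via the ratio of composition counts $\binom{nd+n-1-M}{n-1}/\binom{nd+n-1}{n-1}$. Your exact product formula, the $1-x\le e^{-x}$ step, the integral comparison, and the identity $(nd+1)/(nd+n)=\bigl(d/(d+1)\bigr)\bigl(1+1/(nd)\bigr)$ are all correct. The only methodological difference is how the binomial ratio is bounded: the paper uses Stirling's formula together with concavity of $g(x)=(x+1)\log(x+1)-x\log x$, which yields the rate $\log(1+1/d)$ directly, uniformly for every threshold up to the total photon number $nd$, with no residual correction factor to absorb.

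There is, however, a genuine gap in your last step. Closing your chain requires $(1+1/(nd))^{M}\le 2$, i.e.\ essentially $M\le nd\log 2$, and you dismiss the complementary case only when $M>nd$ (empty event). The window $nd\log 2< M\le nd$ is left uncovered, and ``satisfied in any non-degenerate regime'' is not a proof: the lemma is stated for arbitrary $n,d,\epsilon$, and in that window your bound only gives $\Pr[m_1\ge M]\le (\epsilon/2n)\,e^{M/(nd)}$ with $e^{M/(nd)}$ as large as $e>2$, so the factor $2$ inside $\log(2n/\epsilon)$ cannot absorb the loss incurred by replacing $\log(1+1/d)$ with $\log\bigl((nd+n)/(nd+1)\bigr)$. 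The statement does hold there (the paper's Stirling/concavity estimate covers all $M\le nd$, and the slack in your $1-x\le e^{-x}$ and integral steps in fact compensates, e.g.\ at $M=nd$ the exact probability is $n/\binom{nd+n-1}{n-1}$), but your written argument does not establish it; you should either add the hypothesis $M\le nd\log 2$, close the window by working directly with the product $\prod_{i=0}^{M-1}(nd-i)/(nd+n-1-i)$ or via the entropy/concavity route, or note explicitly that in the application one has $M=d_B=O(\log(n/\epsilon))$ while $nd=\Omega(n)$, so the condition holds with ample margin. A minor point: $M$ need not be an integer, so replace it by $\lceil M\rceil$ before invoking the composition count; monotonicity makes this harmless.
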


\begin{proof}[Proof of Theorem \ref{theo_hetero}]
We fix $d_0 := g\left(\frac{\epsilon}{4}\right) Y_\mathrm{test}$ and $d_B :=\frac{\log \left(4n/\epsilon\right)}{\log (1+1/d_0)}$.
From Lemma \ref{step1bis}, we know that 
\begin{equation}
\mathrm{tr} \, T_n (\mathbbm{1}_k-T_k) \rho^{n+k} \leq \frac{\epsilon}{4}
\end{equation}
for any rotationally invariant state $\rho^{n+k}$. 
From Lemma \ref{step2bis}, we obtain
\begin{equation}
\mathrm{tr} \, U_n (\mathbbm{1}_k-T_k) \rho^{n+k} \leq \frac{\epsilon}{2}
\end{equation}
and Lemma \ref{step3bis} shows that
\begin{equation}
\mathrm{tr} \,  (1-U_n) V_n \rho^{n+k} \leq \frac{\epsilon}{2}
\end{equation}
for any rotationally invariant state $\rho^{n+k}$. 

Using that $V_n \leq V_n(1-U_n) + U_n$, one finally has:
\begin{eqnarray}
p_\mathrm{bad, Bob} &:=&   \mathrm{tr} \left[ (V_n \circ \mathcal{T}) \rho^{n+k} \right]\\
&=&  \mathrm{tr} \left[ V_n (\mathbbm{1}_k - T_k)\rho^{n+k} \right]\\
&\leq& \mathrm{tr} \left[(1-U_n)  V_n (\mathbbm{1}_k - T_k)\rho^{n+k} \right] +  \left[ U_n (\mathbbm{1}_k - T_k)\rho^{n+k} \right]\\
&\leq&\mathrm{tr} \left[(1-U_n)  V_n\rho^{n+k} \right] + 2 \mathrm{tr} \, T_n (\mathbbm{1}_k-T_k) \rho^{n+k}  \\
&\leq& \frac{\epsilon}{2} + 2  \times \frac{\epsilon}{4}   \\
&\leq& \epsilon.
\end{eqnarray}

\end{proof}

%%%%%%%%%%%%%%%%%%%%%%%%%%%%%%%%%%%%%%%%%%%%%%%%%%%
%%%%%%%%%%%%%%%%%%%%%%%%%%%%%%%%%%%%%%%%%%%%%%%%%%%

\section{Concentration of measure on the sphere}
\label{proof1}
In this section, we establish the following result which implies Lemma \ref{step1bis}. 
\begin{lemma}
\label{step1}
If the vector $\mathbf{X} = (X_1, \cdots, X_{k+n})$ is uniformly distributed on the unit sphere of $\mathbbm{R}^{n+k}$, then 
\begin{equation}
\mathrm{Pr}\left[\frac{1}{n} \sum_{i=1}^n X_{k+i}^2  \geq g(\delta)\frac{1}{k} \sum_{i=1}^k X_k^2 \right] \leq \delta
\end{equation}
where 
\begin{equation}
g(\delta) = \frac{  1 + 2\sqrt{\frac{ \log (2/\delta)}{n}} +\frac{2  \log (2/\delta)}{n} }{1 - 2\sqrt{\frac{1}{k} \log \left(\frac{2}{\delta} \right)}}.
\end{equation}
\end{lemma}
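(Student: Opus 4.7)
The natural plan is to represent the uniform distribution on the sphere via independent Gaussians and then reduce the statement to two standard chi-squared tail bounds. Let $Z_1,\dots,Z_{n+k}$ be i.i.d.\ real standard Gaussians and set $W := \|Z\|^2 = \sum_{i=1}^{n+k} Z_i^2$. Then $(Z_1/\sqrt{W},\dots,Z_{n+k}/\sqrt{W})$ has the same joint distribution as $\mathbf X$, and, crucially, the common factor $W$ cancels in the ratio of interest:
\begin{equation*}
\frac{\tfrac{1}{n}\sum_{i=1}^{n} X_{k+i}^{2}}{\tfrac{1}{k}\sum_{i=1}^{k} X_{i}^{2}}
\;\stackrel{d}{=}\;
\frac{\tfrac{1}{n}\sum_{i=1}^{n} Z_{k+i}^{2}}{\tfrac{1}{k}\sum_{i=1}^{k} Z_{i}^{2}} .
\end{equation*}
The numerator and denominator are independent chi-squared variables with $n$ and $k$ degrees of freedom respectively, so the problem reduces to controlling each one separately.

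Next, I would invoke the Laurent--Massart deviation inequalities \cite{LM00}: for every $x \geq 0$,
\begin{equation*}
\Pr\!\left[\,\sum_{i=1}^{n} Z_{k+i}^{2} \;\geq\; n + 2\sqrt{n\,x} + 2x \,\right] \leq e^{-x},
\qquad
\Pr\!\left[\,\sum_{i=1}^{k} Z_{i}^{2} \;\leq\; k - 2\sqrt{k\,x} \,\right] \leq e^{-x}.
\end{equation*}
Choosing $x = \log(2/\delta)$ and applying the union bound, with probability at least $1-\delta$ both the upper bound on the numerator and the lower bound on the denominator hold simultaneously. On this good event, the ratio is at most
\begin{equation*}
\frac{n + 2\sqrt{n x} + 2x}{n}\,\cdot\,\frac{k}{k - 2\sqrt{k x}}
\;=\;
\frac{1 + 2\sqrt{x/n} + 2x/n}{1 - 2\sqrt{x/k}}
\;=\; g(\delta),
\end{equation*}
which is precisely the claimed bound. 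Lemma \ref{step1bis} for the complex case then follows by noticing that a Haar-random unitary sends $\mathbf X$ to a vector whose real and imaginary components, after normalisation, are uniform on the unit sphere of $\mathbb{R}^{2(n+k)}$, and applying the real statement to this $2(n+k)$-dimensional vector after grouping coordinates into the $k$ ``test'' modes and $n$ ``remaining'' modes (each mode contributing two real coordinates, so the effective dimensions $2n$ and $2k$ appear in place of $n$ and $k$).

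I do not expect a serious obstacle: the whole argument is one page once the right two ingredients (Gaussian representation and Laurent--Massart) are in place. The only mild subtlety is the implicit regime $k > 4\log(2/\delta)$ needed for the denominator of $g(\delta)$ to be positive, which must hold for the bound to be meaningful but is easily satisfied in the parameter regime used by Theorem~\ref{theo_hetero}.
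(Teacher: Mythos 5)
Your proposal is correct and follows essentially the same route as the paper: reduce the spherical statement to independent standard Gaussians (the normalization cancels in the ratio), then apply the two Laurent--Massart $\chi^2$ tail bounds with $x=\log(2/\delta)$ and a union bound, exactly as done in Appendix~\ref{proof1}. The only cosmetic difference is that the paper phrases the union bound through an auxiliary threshold $A = 1-2\sqrt{\tfrac{1}{k}\log(2/\delta)}$ via $\Pr[Z_n \geq g(\delta)Y_k] \leq \Pr[Y_k \leq A] + \Pr[Z_n \geq g(\delta)A]$, which is equivalent to your "good event" argument.
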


We do not prove Lemma \ref{step1} directly because manipulating normalized vectors on the sphere is not very convenient. We use instead the natural invariance of the problem and first show that it is sufficient to prove the lemma for independent normal variables instead of vectors on the unit sphere. 
This is the case because a uniformly chosen vector on the sphere can be obtained by drawing $n$ independent normal variables and normalizing the corresponding vector. 

Let $X_1, \cdots, X_n$ be such independent normal random variables: $X_i \sim  \mathcal{N}(0,1)$, and let us define the following quantities:
\begin{equation}
Y_k = \frac{1}{k} \sum_{i=1}^k X_i^2 \quad \text{and} \quad Z_n = \frac{1}{n} \sum_{i=1}^n X_{k+i}^2.
\end{equation}
Note that the normalized random vector $\mathbf{\tilde{X}} = \frac{1}{\sqrt{\sum_{i=1}^{n+k} X_i^2}} (X_1, \cdots, X_{n+k})$ is uniformly distributed on the unit sphere of $\mathbbm{R}^{n+k}$. 
In particular, it is sufficient to prove that 
\begin{equation}
\label{equivalent}
\mathrm{Pr}\left[Z_n  \geq g(\delta) Y_k \right] \leq \delta
\end{equation} in order to establish the lemma. 

Let us proceed with the proof of Eq.\ \ref{equivalent}.
We first notice that for any $A >0$, 
\begin{equation}
\mathrm{Pr}\left[Z_n  \geq g(\delta) Y_k \right]  \leq  \mathrm{Pr}\left[Y_k  \leq A \right]  + \mathrm{Pr}\left[Z_n  \geq g(\delta) A \right] .
\end{equation}
We can now bound this two probabilities using the fact that $Y_k$ and $Z_n$ are independent random variables, with a $\chi^2(k)$ and a $\chi^2(n)$ distribution, respectively. 
To this end, we use two bounds on $\chi^2$ distributions established by Laurent and Massart \cite{LM00}:
\begin{equation}
\mathrm{Pr}\left[Y_k \leq 1 - 2\sqrt{\frac{x}{k}}\right] \leq \exp (-x)
\quad \text{and} \quad
\mathrm{Pr}\left[Z_n \geq 1 + 2\sqrt{\frac{x}{n}} + \frac{2x}{n}\right] \leq \exp (-x).  
\end{equation}
Choosing $x = \log(2/\delta)$ in both cases gives:
\begin{equation}
\mathrm{Pr}\left[Y_k \leq 1 - 2\sqrt{\frac{ \log(2/\delta)}{k}}\right] \leq \frac{\delta}{2}
\quad \text{and} \quad
\mathrm{Pr}\left[Z_n \geq 1 + 2\sqrt{\frac{ \log(2/\delta)}{n}} + \frac{2 \log(2/\delta)}{n}\right] \leq \frac{\delta}{2}.  
\end{equation}
Taking $A :=  1 - 2\sqrt{\frac{1}{k} \log \left(\frac{2}{\epsilon} \right)}$ concludes the proof of Lemma \ref{step1}.

%%%%%%%%%%%%%%%%%%%%%%%%%%%%%%%%%%%%%%%%%%%%%%%%%%%
%%%%%%%%%%%%%%%%%%%%%%%%%%%%%%%%%%%%%%%%%%%%%%%%%%%

\section{Proof of Lemma \ref{step2bis}}
\label{proof2}
In this section, we prove Lemma \ref{step2bis} which we recall here.
\begin{replemma}{step2bis}
\label{step2restated}
Let $T_n$ and $U_n$ be defined as
\begin{equation}
T_n := \frac{1}{\pi^n} \int_{\sum_{i=1}^n |\alpha_i|^2 \geq n d_0} |\alpha_1\rangle\!\langle\alpha_1| \cdots |\alpha_n\rangle\!\langle \alpha_n| \mathrm{d}\alpha_1 \cdots \mathrm{d} \alpha_n,
\end{equation}
and
\begin{equation}
U_n := \sum_{m= n d_0+1}^\infty \Pi_m^n \quad \text{with} \quad \Pi_m^n =\sum_{m_1+ \cdots+ m_n=m} |m_1 \cdots m_n\rangle \!\langle m_1\cdots m_n|.
\end{equation}
Then, the following inequality holds:
\begin{equation}
U_n \leq 2 T_n.
\end{equation}
\end{replemma}

\subsection{Some preliminaries}

The following integrals will be useful. For $a>0$, let us define:
\begin{equation}
I_n(a) = \int_{y_i \geq 0, \sum_{i=1}^n y_i \geq a} e^{-y_1 - y_2 \cdots -y_n} \mathrm{d}y_1 \cdots \mathrm{d}y_n
\end{equation}
and 
\begin{equation}
J_n(k,a) = \int_{y_i \geq 0, \sum_{i=1}^n y_i \geq  a}\frac{y_1^k}{k!} e^{-y_1 - y_2 \cdots -y_n} \mathrm{d}y_1 \cdots \mathrm{d}y_n.
\end{equation}
These integrals can be computed explicitly.
\begin{lemma}
\label{prel1}
\begin{align}
I_n(a) &= e^{-a} \sum_{k = 0}^{n-1} \frac{a^k}{k!}\\
J_n(k,a) & =\frac{\Gamma(k+1,a)}{\Gamma(k+1,0)} + e^{-a}   \sum_{m = k+1}^{k+n} \frac{a^{m}}{m!}.
\end{align}
\end{lemma}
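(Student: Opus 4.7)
Both identities are tail integrals of products of Exp$(1)$ densities. I would evaluate $I_n$ first by a direct probabilistic interpretation, and then reduce $J_n$ to $I_{n-1}$ by splitting the $y_1$-integration at $y_1=a$.

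For $I_n(a)$, the integrand $e^{-\sum y_i}$ on $[0,\infty)^n$ is exactly the joint density of $n$ i.i.d.\ Exp$(1)$ random variables $Y_1,\dots,Y_n$, so $I_n(a) = \Pr[Y_1+\cdots+Y_n \geq a]$. The sum follows an Erlang (i.e.\ Gamma with integer shape $n$ and rate $1$) distribution, whose survival function is precisely the claimed expression $e^{-a}\sum_{k=0}^{n-1}a^k/k!$. This is standard, but if one wants an elementary argument it follows by induction on $n$: the base case $I_1(a)=e^{-a}$ is direct, and the inductive step integrates out the innermost variable and applies integration by parts, equivalently verifying the recursion $I_n(a) = I_{n-1}(a) + e^{-a} a^{n-1}/(n-1)!$ that the formula predicts.

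For $J_n(k,a)$, I would split the $y_1$-integration at $y_1 = a$. On $\{y_1 \geq a\}$ the constraint $\sum_i y_i \geq a$ is automatic, the integrals over $y_2,\dots,y_n$ factorize and each evaluates to $1$, so this piece contributes $\int_a^\infty \tfrac{y_1^k}{k!} e^{-y_1}\,dy_1 = \Gamma(k+1,a)/\Gamma(k+1,0)$, matching the first summand. On $\{0 \leq y_1 < a\}$ the remaining integral over $y_2,\dots,y_n$ is $I_{n-1}(a-y_1)$, which by the first part of the lemma equals $e^{-(a-y_1)}\sum_{j=0}^{n-2} (a-y_1)^j/j!$. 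Substituting this and applying the Beta identity $\int_0^a \tfrac{y_1^k(a-y_1)^j}{k!\,j!}\,dy_1 = a^{k+j+1}/(k+j+1)!$ reduces the remainder to a finite sum of terms $e^{-a} a^m/m!$ indexed by $m = k+j+1$, giving the second summand.

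\textbf{Main obstacle.} There is no real difficulty; the only point requiring care is the bookkeeping of the index $m$ and its range after the Beta reduction (which is determined by $j \in \{0,\dots,n-2\}$). As a sanity check and a faster alternative, observe that $\tfrac{y_1^k e^{-y_1}}{k!}$ is itself the Gamma$(k+1,1)$ density, so $J_n(k,a)$ is the survival function at $a$ of a Gamma$(k+n,1)$ random variable, and the closed form follows immediately from the Erlang tail formula used for $I_n$.
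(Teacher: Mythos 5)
Your strategy is essentially the paper's own: the paper proves the $I_n$ formula by induction (integrating out one variable), and for $J_n$ it performs the same split you describe, written as $1$ minus the integral over $\{\sum_i y_i \le a\}$, followed by the identical Beta identity $\int_0^a y^k(a-y)^m\,\mathrm{d}y = \frac{k!\,m!\,a^{k+m+1}}{(k+m+1)!}$; your Gamma-convolution shortcut (sum of a $\Gamma(k+1,1)$ variable and $n-1$ exponentials is $\Gamma(k+n,1)$) is a legitimate faster route.

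There is, however, a point you glossed over, and it is exactly where the care you announced was needed. With your own bookkeeping, the inner integral on $\{y_1<a\}$ is $I_{n-1}(a-y_1)=e^{-(a-y_1)}\sum_{j=0}^{n-2}(a-y_1)^j/j!$, so after the Beta reduction the index $m=k+j+1$ runs over $k+1,\dots,k+n-1$, \emph{not} $k+1,\dots,k+n$ as in the displayed statement; equivalently, your Erlang-tail argument gives $J_n(k,a)=e^{-a}\sum_{m=0}^{k+n-1}a^m/m!$. Hence what your (correct) derivation actually establishes is
\begin{equation*}
J_n(k,a) \;=\; \frac{\Gamma(k+1,a)}{\Gamma(k+1,0)} \;+\; e^{-a}\sum_{m=k+1}^{k+n-1}\frac{a^{m}}{m!},
\end{equation*}
and the stated identity is off by the extra term $e^{-a}a^{k+n}/(k+n)!$ (sanity check $n=1$, $k=0$: $J_1(0,a)=e^{-a}$, whereas the stated right-hand side is $e^{-a}(1+a)$). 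The same off-by-one occurs in the paper's proof, which substitutes for $I_{n-1}(a-y_1)$ the expression with the sum up to $n-1$, i.e.\ the formula for $I_n$. So you should not assert that your computation ``gives the second summand'' as displayed; state the corrected upper limit $k+n-1$. The slip is harmless downstream: the proof of Lemma \ref{step2bis} only uses $q_k \ge \Gamma(k+1,a)/\Gamma(k+1,0)$ together with monotonicity in $k$, which the corrected formula still delivers.
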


\begin{proof}
The first equality is proved by induction. It is clear that $I_1(a) = e^{-a}$.
Then:
\begin{align}
1-I_{n+1}(a) &= \int_{y_i \geq 0, \sum_{i=1}^{n+1} y_i \leq a} e^{-y_1 - y_2 \cdots -y_{n+1}} \mathrm{d}y_1 \cdots \mathrm{d}y_{n+1}\\
&=  \int_0^{a} \mathrm{d}y_{n+1}  e^{-y_{n+1}} \int_{y_i \geq 0, \sum_{i=1}^n y_i \leq d_0-y_{n+1}}  e^{-y_1 - y_2 \cdots -y_n} \mathrm{d}y_1 \cdots \mathrm{d}y_n\\
&= \int_0^{a} \mathrm{d}y_{n+1}  e^{-y_{n+1}} (1-I_n(a-y_{n+1}))\\
&=  \int_0^{a} \mathrm{d}y_{n+1}  e^{-y_{n+1}} \left(1-e^{-a+y_{n+1}} \sum_{k =0}^{n-1} \frac{(a-y_{n+1})^k}{k!}\right)\\
&= 1- e^{-a} - e^{-a}  \int_0^{a} \sum_{k = 0}^{n-1} \frac{(a-y)^k}{k!}  \mathrm{d}y\\
&= 1- e^{-a} - e^{-a}  \sum_{k = 0}^{n-1} \frac{a^{k+1}}{(k+1)!}\\
&= 1 - e^{-a} \sum_{k=0}^{n} \frac{a^k}{k!}\\
\end{align}

\begin{align}
J_n(k,a) &=  1- \int_{y_i \geq 0, \sum_{i=1}^n y_i \leq  a}\frac{y_1^k}{k!} e^{-y_1 - y_2 \cdots -y_n} \mathrm{d}y_1 \cdots \mathrm{d}y_n\\
&= 1- \int_0^{a} \mathrm{d}y_1 \frac{y_1^k}{k!} e^{-y_1} \int_{y_i \geq 0, \sum_{i=2}^n y_i \leq a  - y_1 }  e^{-y_2 - y_3 \cdots -y_n} \mathrm{d}y_2 \cdots \mathrm{d}y_n\\
&=  1- \int_0^{a} \mathrm{d}y_1 \frac{y_1^k}{k!} e^{-y_1}  (1-I_{n-1}(a - y_1))\\
&=  1- \int_0^{a} \mathrm{d}y_1 \frac{y_1^k}{k!}  e^{-y_1}  \left( 1-e^{-a+y_1} \sum_{m = 0}^{n-1} \frac{(a-y_1)^m}{m!}\right)\\
&=  \frac{\Gamma(k+1,a)}{\Gamma(k+1,0)} + e^{-a}  \sum_{m = 0}^{n-1}  \int_0^{a}   \frac{y^k (a-y)^m}{k! \,m!} \mathrm{d}y
\end{align}
where $\Gamma(s,x)=\int_x^\infty t^{s-1}e^{-t}\mathrm{d}t$ is the incomplete gamma function
Using the fact that
\begin{equation}
\int_0^a x^k (a-x)^m \mathrm{d}x = \frac{k! m! a^{k+m+1}}{(k+m+1)!},
\end{equation}
one obtains
\begin{align}
J_n(k,a) &=  \frac{\Gamma(k+1,a)}{\Gamma(k+1,0)} + e^{-a}   \sum_{m = 0}^{n-1} \frac{(a)^{k+m+1}}{(k+m+1)!}\\
&= \frac{\Gamma(k+1,a)}{\Gamma(k+1,0)} + e^{-a}   \sum_{m = k+1}^{k+n} \frac{a^{m}}{m!}\\
\end{align}
\end{proof}

\subsection{Proof of Lemma \ref{step2bis}}

Integrating over the $n$ phases gives:
\begin{equation}
T_n = \sum_{k_1, \cdots, k_n} \int_{x_i\geq 0, \sum_{i=1}^n x_i \geq n d_0} \prod_{i=1}^n e^{-x_i} \frac{x_i^{k_i}}{k_i!} \mathrm{d}x_i |k_1 \cdots k_n\rangle \!\langle k_1\cdots k_n|.
\end{equation}
Because of its rotation invariance in phase-space, the operator $T_n$ can be written as a mixture of $\Pi_k^n$. Let us note $q_k \geq 0$ the corresponding coefficients: $T_n = \sum_{k=0}^\infty q_k  \Pi_k^n$. Considering the term $\langle k,0,\cdots, 0 |T_n |k, 0,\cdots, 0\rangle$, it is easy to see that $q_k = J_n(k,n d_0)$. 

The proof is then immediate by noticing that the sequence $\frac{\Gamma(k+1,a)}{\Gamma(k+1,0)} = q_k -  e^{-a}   \sum_{m = k+1}^{k+n} \frac{a^{m}}{m!}$ (where we used the result of Lemma \ref{prel1}) is positive and increasing with $k$ for all $a \geq0$. Here, $\Gamma(s,x) := \int_x^\infty t^{s-1} e^{-t} \mathrm{d}t$ refers to the incomplete Gamma function. 
This means that for $k \geq n d_0+1$, 
\begin{equation}
q_k \geq  \frac{\Gamma(n d_0+1,n d_0)} {\Gamma(n d_0+1,0)} +  e^{-a}   \sum_{m = k+1}^{k+n} \frac{a^{m}}{m!} \geq \frac{\Gamma(n d_0+1,n d_0)} {\Gamma(n d_0+1,0)}  \geq \frac{1}{2},
\end{equation}
where we used that $\frac{\Gamma(x+1,x)}{\Gamma(x+1,0)}$ is lower bounded by $1/2$ for all $x\geq0$. 
This allows us to conclude that
\begin{equation}
U_n  \leq 2 T_n. 
\end{equation}

%%%%%%%%%%%%%%%%%%%%%%%%%%%%%%%%%%%%%%%%%%%%%%%%%%%
%%%%%%%%%%%%%%%%%%%%%%%%%%%%%%%%%%%%%%%%%%%%%%%%%%%

\section{Proof of Lemma \ref{step3bis}}
\label{proof3}

The set of vectors $\mathbf{X} = (X_1, \cdots, X_n)$ such that $\sum_{i=1}^n X_i = k$ and $X_1 \geq m$ contains $a_{k-m}^n := {n+k-m-1 \choose k-m}$ elements. 
Let us note $p_k(m,n)$ the probability that the maximum of $X_i$ is greater than $m$ if one measures the photon number for the state $\sigma_k^n$:
\begin{equation}
p_k(m,n) = \mathrm{Pr}\left[ \max_{i=1\cdots n} X_i \geq m \; \text{s.t.} \;\sum_{i=1}^n X_i  = k\right].
\end{equation}
The union bound gives:
\begin{eqnarray}
p_k(m,n)  &\leq & \mathrm{Pr}\left[X_1 \geq m  \; \text{s.t.} \; \sum_{i=1}^n X_i  = k\right] + \cdots + \mathrm{Pr}\left[X_n \geq m  \; \text{s.t.} \; \sum_{i=1}^n X_i  = k\right] \\
& \leq & n \frac{a_{k-m}^n}{a_k^n} = n \frac{(n+k-m-1)! k!}{(n+k-1)! (k-m)!} = \frac{n (n+k)}{n+k-m} \frac{(n+k-m)! k!}{(n+k)! (k-m)!}.
\end{eqnarray}

Let $x>0$, then Stirling approximation formula reads (here $\log$ is the natural logarithm):
\begin{equation}
n x \log n + n x (\log x -1) + \frac{1}{2} \log n + \frac{1}{2} \log x + \log \sqrt{2\pi} \leq \log (nx)! \leq n x \log n + n x (\log x -1) + \frac{1}{2} \log n + \frac{1}{2} \log x + 1.
\end{equation}

Let us introduce the variables $d$ and $\delta$ such that $k  = d n$ and $m = \delta n$. 
Then
\begin{align}
\log  \frac{(n+k-m)! k!}{(n+k)! (k-m)!} & = \log (n(d+1-\delta)) ! + \log (nd)! - \log (n(d+1))! - \log (n(d-\delta))!\\
& \leq- n \left\{g(d) - g(d-\delta) \right\} + \frac{1}{2} \log \frac{d(d+1-\delta)}{(d+1)(d-\delta)} + 2 - \log 2\pi
\end{align}
where 
\begin{equation}
g(x)  = (x+1) \log (x+1) -x \log x. 
\end{equation}
This gives
\begin{equation}
\log p_k(m,n) \leq - n \left\{g(d) - g(d-\delta) \right\}+ \log n + \frac{1}{2} \log \frac{d(d+1)}{(d-\delta)(d-\delta+1)} + 2 - \log 2\pi
\end{equation}

The function $g$ is concave which implies that $g(d) - g(d-\delta) \geq \delta g'(d) = \delta \log (1+1/d)$, and therefore
\begin{align}
- \log p_k(m,n) &\geq n\delta \log (1+1/d) -\log n+ \frac{1}{2} \log \frac{(d+1)(d+1-\delta)} {d(d-\delta)} - 2 + \log 2\pi\\
&\geq n\delta \log (1+1/d) -\log n - 2 + \log 2\pi\\
&\geq n\delta \log (1+1/d) - \log n - \log 2\\
\end{align}

Choosing $m  = \frac{\log(2n/\epsilon)}{\log (1+1/d)}$ gives $p_k(m,n) \leq \epsilon$ and proves Lemma \ref{step3bis}.

%%%%%%%%%%%%%%%%%%%%%%%%%%%%%%%%%%%%%%%%%%%%%%%%%%%
%%%%%%%%%%%%%%%%%%%%%%%%%%%%%%%%%%%%%%%%%%%%%%%%%%%

\section{Symmetry of the state for heterodyne detection}
\label{proof_symmetry} 
In this section, we show that the symplectic transformation applied in phase-space commutes with the heterodyne detection. 
The compact subgroup of the symplectic group $\mathrm{Sp}(2N,\mathbbm{R})$ consisting of phase shifts and beamsplitters is usually noted $K(n)$ in the literature and is isomorphic to the unitary group $U(N)$ (see for instance Ref.\ \cite{ADMS95}). We note $\vec{a}:=(\hat{a}_1, \cdots, \hat{a}_N)$ and $\vec{a}^\dagger := (\hat{a}_1^\dagger, \cdots, \hat{a}_N^\dagger)$ the vectors of annihilation and creation operators of the $N$ modes considered. 
Then, in the Heisenberg picture, under a symplectic transformation, the $\hat{a}$'s and $\hat{a}^\dagger$'s transform independently as:
\begin{equation}
\vec{a} \rightarrow U \vec{a}, \quad \text{and} \quad \vec{a}^\dagger \rightarrow U^* \vec{a}^\dagger,
\end{equation}
where $U$ is a unitary matrix.

Moreover, defining $V = \mathrm{Re}(U)$ and $W=-\mathrm{Im}(U)$ the real and imaginary parts of $U$ such that $U = V - iW$, the displacement vector $(\vec{x},\vec{p})^T:=(x_1, \cdots, x_N, p_1, \cdots, p_N)^T$ is transformed as
\begin{equation}
\begin{pmatrix}
\vec{x} \\ \vec{p} \\
\end{pmatrix}
\rightarrow
\begin{pmatrix}
V & W \\
-W & V\\
\end{pmatrix}
\begin{pmatrix}
\vec{x} \\ \vec{p} \\
\end{pmatrix}.
\end{equation}

In the quantum key distribution protocol, both Alice and Bob perform a heterodyne measurement of their respective $n+k$ modes. The probability distribution of their outcomes is given by the $Q$-function of the state $\rho_{AB}^{n+k} \in (\mathcal{H}_A\otimes \mathcal{H}_B)^{\otimes (n+k)}$: $Q_\rho(\vec{x}_A, \vec{p}_A, \vec{x}_B, \vec{p}_B)$. 
From the description given above of the subgroup $K(n)$, it appears that the $Q$-function associated with the "rotated" state $\rho':=(\mathcal{U}_A \otimes \mathcal{U}_B^*) \rho_{AB}^{n+k} (\mathcal{U}_A\otimes \mathcal{U}_B^*)^\dagger$ (where $\mathcal{U}$ is the representation of the symplectic transformation corresponding  to the unitary $U$) is simply:
\begin{equation}
Q_{\rho'}(\vec{x}_A', \vec{p}_A', \vec{x}_B', \vec{p}_B') = Q_\rho(V \vec{x}_A - W \vec{p}_A, W \vec{x}_A + V \vec{p}_A, V\vec{x}_B+W \vec{p}_B, -W \vec{x}_B + \vec{V}p_B).
\end{equation}
This local (and classical) transformation of the coordinates can be applied by Alice and Bob. In other words, the quantum transformation corresponding to the networks of beamsplitters and phase-shifts commutes with the heterodyne measurement. 

Moreover, if one makes sure that the test (i.e., the choice of the random $2k$-dimensional subspace of $\mathbbm{R}^{2(n+k)}$) respects the symmetry above, then the post processing of the QKD protocol commutes with the map $\mathcal{U}_A\otimes \mathcal{U}_B^*$, meaning that the state $\rho_{AB}^{n+k}$ can be considered invariant under such maps. In particular, the state held by Bob, $\rho_B^{n+k} := \mathrm{tr}_A \rho_{AB}^{n+k}$ is invariant under any $\mathcal{U}$ consisting in beamsplitters and phase shifts.

%%%%%%%%%%%%%%%%%%%%%%%%%%%%%%%%%%%%%%%%%%%%%%%%%%%
%%%%%%%%%%%%%%%%%%%%%%%%%%%%%%%%%%%%%%%%%%%%%%%%%%%

\newpage

\section{Main theorem for the homodyne protocol}
\label{proof_homo}

We now consider a protocol where Alice sends coherent states with a Gaussian modulation and Bob performs homodyne detection with a random quadrature chosen uniformly in $[0, 2\pi]$ for each of his $n+k$ modes.

In the following, we note $X_1, \cdots, X_{n+k}$ the random variables corresponding to the $n+k$ quadrature measurement outcomes for Bob. We assume that for each mode, Bob chooses a random direction $\theta \in[0, \pi/2]$ in phase space. Then, he chooses to measure the quadrature either along $\theta$ (which we call a $q$ measurement) or along $\theta+ \pi/2$ (this is the $p$ measurement). We will show in Section \ref{protocol_homodyne} that Bob's state can be assumed to rotationally invariant. 

Our main result is summarized by the following theorem. 
\begin{theo}
\label{theo_homo}
Let $\epsilon, Y_\mathrm{test} >0$ be fixed parameters. Let $Y_k = \frac{1}{k} \sum_{i=1}^k X_i^2$ be the average of Bob's (squared) homodyne measurement outcomes on the first $k$ modes of his symmetrized state, and let $\rho^n$ be the state of his $n$ remaining modes. Let $d_0:= 2 g\left(\frac{\epsilon}{16}\right) Y_k$. We choose $n$ large enough so that $e^{-\beta n} \leq \frac{\epsilon}{16}$  with  $\beta := c_0 d_0 - \frac{\log d_0}{2}$ and $c_0 := (1-1/\sqrt{2})^2$. Let $d_B=  \frac{\log \left(4n/\epsilon\right)}{\log (1+1/d_0)}$.  Let $\overline{\mathcal{H}}_B = \mathrm{Span} \left\{ |0\rangle, \cdots, |d_B-1\rangle\right\}$ be the finite dimensional Hilbert space spanned by states with less than $d_B$ photons.
Then the probability that $Y_k \leq Y_\mathrm{test}$ and that the projection of $\rho^n$ onto $\overline{\mathcal{H}}_B^{\otimes n}$ fails is less than $\epsilon$. \end{theo}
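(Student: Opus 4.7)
The plan is to mimic the three-step structure of the proof of Theorem \ref{theo_hetero}, adapted to the fact that Bob now obtains only a single real number per mode instead of a complex amplitude. We again decompose the bad event ``the test passes and the projection onto $\overline{\mathcal{H}}_B^{\otimes n}$ fails'' via the telescoping inequality $V_n \le V_n(1-U_n) + U_n$ used in the heterodyne proof. By the symmetrization argument to be established in Appendix \ref{protocol_homodyne}, we may assume that $\rho_B^{n+k}$ is rotationally invariant under the group $K(n+k)$ of passive linear transformations. Since a rotationally invariant state in phase space has an isotropic Wigner function in $\mathbb{R}^{2(n+k)}$, integrating out one quadrature per mode yields a joint distribution of the homodyne outcomes $(X_1,\dots,X_{n+k})$ that is itself isotropic on $\mathbb{R}^{n+k}$; equivalently, conditional on the norm $\sum_i X_i^2$, the vector $\vec{X}$ is uniformly distributed on the real sphere $S^{n+k-1}$.

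Given this sphere-uniform distribution, the first step is a direct application of Lemma \ref{step1} (the real version of the concentration-on-the-sphere lemma), which yields $Z_n := \frac{1}{n}\sum_{i=k+1}^{n+k} X_i^2 \le g(\epsilon/16)\,Y_k = d_0/2$ except with probability at most $\epsilon/16$. The second step, which is the homodyne variant of Lemma \ref{step2bis} (the content of Appendix \ref{proof2ter}), converts this bound on the observed energy into a bound on the actual photon number: for a rotationally invariant state whose photon-number decomposition $\rho_B^{n+k}=\sum_p \lambda_p \sigma_p^{n+k}$ puts non-negligible weight on sectors with $p \ge nd_0$, the probability that a random-direction homodyne round produces $\sum_i X_i^2 \le nd_0/2$ is at most $e^{-\beta n}$, which by assumption is $\le \epsilon/16$. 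Combining these two steps yields that, conditionally on passing the test, the state $\rho^n$ has total photon number at most $nd_0$ up to error $O(\epsilon)$. The third step is then Lemma \ref{step3bis} itself (used unchanged from the heterodyne proof): for a rotationally invariant $\rho^n$ with at most $nd_0$ photons in total, the maximum per-mode photon number exceeds $d_B = \log(4n/\epsilon)/\log(1+1/d_0)$ with probability at most $\epsilon/4$, which is exactly the failure probability of the projection onto $\overline{\mathcal{H}}_B^{\otimes n}$.

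Summing the three error contributions via the union bound (as in the last calculation of the proof of Theorem \ref{theo_hetero}, where the factor $2$ coming from the analogue of $U_n \le 2T_n$ must be tracked) gives a total bad-event probability of at most $\epsilon$, proving the theorem. The main obstacle, and the only genuinely new input compared with the heterodyne case, is the homodyne variant of Lemma \ref{step2bis}. In the heterodyne proof, the link between coherent-state weights and photon number was a clean combinatorial identity $U_n\le 2T_n$ using the explicit evaluation of $I_n$, $J_n$ and a monotonicity property of the incomplete gamma function. For homodyne we only measure one quadrature per mode, so the analogous statement must be a probabilistic concentration inequality: a generalized Fock state $\sigma_p^n$ with $p\ge nd_0$, measured along uniformly random quadrature directions, produces outcomes whose mean squared value is $\sim d_0$ per mode, and a Chernoff-type argument controls the probability that the sum dips below $nd_0/2$. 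The constant $c_0=(1-1/\sqrt{2})^2$ arises from precisely this deviation (with the factor $1/\sqrt{2}$ reflecting our factor-of-two threshold), while the $\tfrac{1}{2}\log d_0$ correction in $\beta$ comes from a union bound over the photon-number sectors $p$ that contribute; these together justify the hypothesis $e^{-\beta n}\le \epsilon/16$ required in the theorem.
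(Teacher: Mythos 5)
Your high-level reduction is the same as the paper's: symmetrize so that Bob's classical data are isotropic, use the real-sphere concentration lemma (Lemma \ref{step1}) to control the energy of the unmeasured modes, convert that into a photon-number bound, and finish with Lemma \ref{step3bis} and a union bound, exactly as in the proof of Theorem \ref{theo_hetero}. The gap is that you never prove the one step that is genuinely new in the homodyne case. You correctly identify it ("the homodyne variant of Lemma \ref{step2bis}"), but you only restate what it should say --- that a state with $p\geq n d_0$ photons is exponentially unlikely to produce homodyne outcomes with $\sum_i X_i^2\leq n d_0/2$ --- and then appeal to an unspecified "Chernoff-type argument". This is not routine: the homodyne outcomes of a generalized Fock state $\sigma_p^n$ are neither independent nor Gaussian (the per-mode laws are squared Hermite functions of a correlated photon-number distribution), so a direct Chernoff bound requires moment-generating-function control you do not supply; moreover, since the photon-number projection does not commute with the homodyne measurement on the key modes, the statement must be formulated at the operator level to be usable inside $\mathrm{tr}\left[V_n(\mathbbm{1}_k-T_k)\rho^{n+k}\right]$. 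The paper does this through Lemma \ref{step2ter}, the operator inequality $T_n\leq 2W_n+e^{-\beta n}\mathbbm{1}$, proved by the vacuum-ancilla/beamsplitter purification trick of \cite{RC09}: $T_n$ is rewritten via quadrature eigenstates on an enlarged space, the region $\sum_i(x_i^2+y_i^2)\geq nd_0$ is split as $\mathcal{R}_{\mathbf{s}}\cup\mathcal{R}_{\bar{\mathbf{s}}}$, each piece evaluates to $F(\mathbf{S})$, and $F(\mathbf{S})\leq P^{\|\mathbf{S}\|^2\geq nd_0/2}+F\bigl(\sqrt{nd_0/2}\bigr)\mathbbm{1}$ with $F\bigl(\sqrt{nd_0/2}\bigr)\leq e^{-\beta n}$ via an incomplete-gamma/Poisson tail bound; the photon-number operator $U_n$ is then controlled by keeping Lemma \ref{step2bis} in the chain, $U_n\leq 2T_n\leq 4W_n+2e^{-\beta n}\mathbbm{1}$, rather than by a direct statement about homodyne statistics of $\sigma_p^n$.

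Two further signs that the missing lemma is not actually worked out in your sketch: you attribute the $\tfrac{1}{2}\log d_0$ term in $\beta$ to "a union bound over the photon-number sectors", whereas in the paper it comes from the subexponential prefactor in the Chernoff bound for the Poisson/incomplete-gamma tail (Lemma \ref{prel2}), the union over sectors playing no role; and $c_0=(1-1/\sqrt{2})^2$ arises from translating the Gaussian integration domain by $a=\sqrt{nd_0/2}$, not directly from the "factor-of-two threshold" as a deviation parameter. Your bookkeeping of the error terms (e.g.\ assigning $\epsilon/4$ to the Lemma \ref{step3bis} step, where $d_B=\log(4n/\epsilon)/\log(1+1/d_0)$ corresponds to $\epsilon/2$) is also loose, though that is repairable; the substantive omission is the proof of $T_n\leq 2W_n+e^{-\beta n}\mathbbm{1}$ or an equivalent operator-level concentration statement.
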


The main novelty compared to the heterodyne protocol is the introduction of the operator $W_n$, corresponding to the projection on the event $\left[Z_n \geq d_0/2\right]$, where $Z_n = \frac{1}{n} \sum_{i=1}^k X_{k+i}^2$. Let us define the random variable $s_i \in \{q_i, p_i \}$ corresponding to a quadrature measurement for the mode $i$, either $q$ or $p$. Then, we can define a string $\mathbf{s} \in \{q, p \}^n$ as being the $n$ quadrature measurement outcomes when measuring the state $\rho^n$. With these notations, 
\begin{equation}
W_n= \frac{1}{2^n} \sum_{\mathbf{s} \in \{x, p \}^n} P^{Z_n(\mathbf{s}) \geq d_0/2}.
\end{equation}
We prove the following result in Section \ref{proof2ter}.
\begin{lemma}
\label{step2ter}
\begin{equation}
T_n \leq 2 W_n + e^{-\beta n} \mathbbm{1}_n
\end{equation}
where $\beta = c_0 d_0 - \frac{\log d_0}{2}$ and $c_0 = (1-1/\sqrt{2})^2$.
\end{lemma}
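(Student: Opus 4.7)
My plan mirrors the structure of the proof of Lemma \ref{step2bis}: use the Fock-basis decomposition $T_n = \sum_k q_k \Pi_k^n$ with $q_k = J_n(k, nd_0)$ established there, and then compare with $W_n$ on each photon-number subspace. The essential new difficulty is that $W_n$ does not commute with the total photon number, so in addition to the algebraic facts already derived, I need a quantum concentration-of-measure step controlling the homodyne outcomes of high-photon states.

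I would split $T_n = T_n^{\mathrm{low}} + T_n^{\mathrm{high}}$ at a photon-number threshold $K$ of order $nd_0/\sqrt 2$. For $T_n^{\mathrm{low}} = \sum_{k < K} q_k \Pi_k^n$, I would use the explicit incomplete-Gamma formula for $q_k$ (Lemma \ref{prel1}) together with a Chernoff-type large-deviation estimate to show that each $q_k$ is at most $e^{-\beta n}$ when $k$ is sufficiently bounded away from $nd_0$, the gap $1 - 1/\sqrt 2$ reflecting the gap between the heterodyne threshold $d_0$ on $|\alpha|^2$ per mode and the homodyne threshold $d_0/2$ on $X^2$ per mode. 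This yields $T_n^{\mathrm{low}} \leq e^{-\beta n}\mathbbm{1}_n$. For $T_n^{\mathrm{high}}$, I would aim to establish the operator inequality $\Pi_{\geq K}^n \leq 2 W_n$. The algebraic fact driving this is that on the $k$-photon subspace, $\Pi_k^n \sum_i \hat X_i^2 \Pi_k^n = (k + n/2)\Pi_k^n$ independently of the quadrature choice $\mathbf{s}$ (the off-diagonal contributions from $\hat a_i^2$ and $\hat a_i^{\dagger 2}$ vanish on $\Pi_k^n$), so the mean of $Z_n = n^{-1}\sum_i X_i^2$ exceeds the threshold $d_0/2$ for any input in $\Pi_{\geq K}^n$; a quantum Chernoff-type bound on the homodyne outcome then gives $\langle \psi | W_n | \psi\rangle \geq 1/2$ for every such $\psi$.

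I expect the main obstacle to be lifting this diagonal, expectation-based estimate into the required operator inequality, since $W_n$ and $\Pi_k^n$ do not commute. I would exploit (i) the full phase-rotation invariance of each $\Pi_k^n$ under independent phase shifts on each mode, and (ii) the discrete binary averaging over $\mathbf{s} \in \{q,p\}^n$ in the definition of $W_n$, which jointly annihilate many off-diagonal Fock-basis matrix elements of $W_n$ and, combined with a Cauchy--Schwarz-type estimate, yield the operator bound. Finally I would combine the two estimates to obtain $T_n \leq 2 W_n + e^{-\beta n}\mathbbm{1}_n$ with the claimed rate $\beta = c_0 d_0 - \tfrac12 \log d_0$; the correction $-\tfrac12 \log d_0$ plausibly arises from a volume/covering factor in the large-deviation step, in the same spirit as the $\log n$ correction appearing in the proof of Lemma \ref{step3bis}.
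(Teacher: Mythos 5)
Your plan diverges from the paper's proof at the crucial point, and the divergence is where the gap lies. The step you defer --- the operator inequality $\Pi^n_{\geq K} \leq 2 W_n$ for a photon-number cutoff $K \sim n d_0/\sqrt{2}$ --- is the entire difficulty of the lemma, and the justification you sketch does not go through. First, the fact that $\Pi_k^n \sum_i \hat X_i^2 \Pi_k^n = (k+n/2)\Pi_k^n$ only tells you that the \emph{mean} of $Z_n$ exceeds $d_0/2$; a mean above threshold gives no lower bound whatsoever on the exceedance probability, and no uniform Chernoff-type concentration is available on the subspace $\Pi^n_{\geq K}$: for a state such as $|K,0,\dots,0\rangle$ the distribution of $Z_n$ has fluctuations of the same order as its mean (the arcsine-type quadrature law of a high Fock state), so $\mathrm{Pr}[Z_n \geq d_0/2] \geq 1/2$ must be argued from the precise location of the threshold, not from concentration. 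Worse, what you need, namely $\mathrm{Pr}[Z_n(\mathbf{q}) \geq d_0/2] + \mathrm{Pr}[Z_n(\mathbf{p}) \geq d_0/2] \geq 1$ for \emph{every} state supported above $K$, sits exactly at the classical phase-space boundary (the square $\{|q_i|^2,|p_i|^2 \lesssim d_0/2\}$ just fits inside the disk of radius set by $K$), so there is no slack left to absorb either quantum corrections or the losses of the Cauchy--Schwarz step you invoke to handle the off-diagonal Fock matrix elements of $W_n$; note the lemma is needed as a genuine operator inequality, since it is applied to the (not rotationally invariant) positive operator obtained after conditioning on the test. Finally, even the easy half is quantitatively off: with the cut at $K\sim nd_0/\sqrt2$, the bound $\max_{k<K} q_k$ is a Poisson lower-tail probability with exponent $\approx \bigl(1-\tfrac{1}{\sqrt2}+\tfrac{1}{\sqrt2}\log\tfrac{1}{\sqrt2}\bigr) d_0 \, n \approx 0.048\, d_0 n$, which is strictly weaker than the claimed $\beta$ whose leading term is $c_0 d_0 = (1-1/\sqrt2)^2 d_0 \approx 0.086\, d_0$; lowering $K$ to fix this erodes the already nonexistent margin in the hard step.

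The paper avoids non-commutativity altogether, which is why its argument closes. Following the trick of \cite{RC09}, it represents the heterodyne POVM element $T_n$ by coupling each mode to a vacuum ancilla through a balanced beamsplitter and measuring conjugate quadratures on the two outputs. The covering $\{\sum_i x_i^2+y_i^2 \geq nd_0\} \subset \{\sum_i s_i^2 \geq nd_0/2\} \cup \{\sum_i \bar s_i^2 \geq nd_0/2\}$, valid for every string $\mathbf{s}$, gives $T_n \leq A_{\mathbf{s}} + A_{\bar{\mathbf{s}}}$, and a direct computation shows $A_{\mathbf{s}} = F(\mathbf{S})$, a function of the \emph{commuting} quadrature family selected by $\mathbf{s}$. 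The analogue of your hard step then reduces to the trivial spectral-calculus inequality $F(\mathbf{S}) \leq P^{\|\mathbf{S}\|^2 \geq nd_0/2} + F\bigl(\sqrt{nd_0/2}\bigr)\mathbbm{1}$, checked on eigenvectors of $\mathbf{S}$ using monotonicity of $F$, and the rate $\beta$ comes from the Gaussian-tail estimate of Lemma \ref{prel2} (a Poisson/Chernoff bound on $F(\sqrt{nd_0/2})$), not from a cut in total photon number; averaging over $\mathbf{s}$ afterwards produces $W_n$. If you want to rescue your route, you would have to prove the quantitative anticoncentration statement for arbitrary superpositions above the photon cutoff with an explicit margin, which is essentially a new result and is precisely what the ancilla-beamsplitter representation lets the paper bypass.
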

The proof of Theorem \ref{theo_homo} then follows exactly the same lines at that of Theorem \ref{theo_hetero}.

%%%%%%%%%%%%%%%%%%%%%%%%%%%%%%%%%%%%%%%%%%%%%%%%%%%
%%%%%%%%%%%%%%%%%%%%%%%%%%%%%%%%%%%%%%%%%%%%%%%%%%%

%%%%%%%%%%%%%%%%%%%%%%%%%%%%%%%%%%%%%%%%%%%%%%%%%%%
%%%%%%%%%%%%%%%%%%%%%%%%%%%%%%%%%%%%%%%%%%%%%%%%%%%

\section{Symmetry of Bob's state for the protocol with homodyne detection}
\label{protocol_homodyne}

Let $\rho^{n+k} = \sum_{\vec{i},\vec{j}} a_{\vec{i},\vec{j}} \, |i_1, i_2,\cdots, i_{n+k}\rangle\!\langle j_1,j_2,\cdots, j_{n+k}|$ be Bob $n+k$-mode state. 

The measurement protocol is the following:
\begin{itemize}
\item for each mode $j$, Bob draws $\theta_j$ uniformly in $[0,2\pi]$ and measures the quadrature along $\cos \theta_j q_j + \sin \theta_j p_j$. He obtains a measurement outcome $x_j$. It is actually sufficient to pick $\theta$  from the set $\{0, 2\pi \frac{1}{d_B}, 2\pi \frac{2}{d_B},\cdots, 2\pi \frac{d_B-1}{d_B}\}$.  
\item Bob then randomly chooses an orthogonal transformation $R$ in $\mathbbm{R}^{n+k}$ and applies it to his vector $\vec{x} = (x_1, \cdots, x_{n+k})$. 
\item finally, Bob informs Alice of his choices of $\theta_j$ and $R$. 
\end{itemize}
Crucially, the transformation $R$ can be equivalently obtained by applying a network of beamsplitters on the $n+k$ modes. 

Because of the random choice of the measured quadratures, the state $\rho^{n+k}$ can be considered invariant under the application of $U(\theta_j) = e^{i \theta_j a^\dagger_j a_j}$. This means that
\begin{align}
\rho^{n+k} & \propto \int_{\vec{\theta} \in [0,2\pi]^n} \left (\prod_{j=1}^n U(\theta_j)\right) \rho \left (\prod_{j=1}^n U(-\theta_j)\right)\mathrm{d} \vec{\theta}\\
& \propto \sum_{\vec{i},\vec{j}} a_{\vec{i},\vec{j}} |i_1, i_2,\cdots, i_{n+k}\rangle\!\langle j_1,j_2,\cdots, j_{n+k}|  \int_{\vec{\theta} \in [0,2\pi]^{n+k}} e^{i \theta_1 (i_1-j_1) }\cdots e^{i \theta_1 (i_{n+k}-j_{n+k}) } \mathrm{d} \vec{\theta}\\
& \propto \sum_{\vec{i}} a_{\vec{i},\vec{i}} |i_1, i_2,\cdots, i_{n+k}\rangle\!\langle i_1,i_2,\cdots, i_{n+k}|  .
\end{align}

Because of the random rotation of the measurement results, the state can also be considered invariant under the action of any network of beamsplitters. In particular, it should be invariant when swapping any two modes and when applying infinitesimal beamsplitters. 
The first condition shows that the coefficient $a_{\vec{i} } := a_{\vec{i}, \vec{i}}$ is invariant when permuting the coordinates of the vector $\vec{i}$. The invariance under infinitesimal beamsplitters guarantees that $a_{i_1, i_2, i_3, \cdots, i_{n+k}} = a_{i_1+1, i_2-1, i_3, \cdots, i_{n+k}} = a_{i_1+\cdots+i_{n+k},0, \cdots, 0}$.  In particular, the coefficient $a_{\vec{i}}$ only depends on the total number of photons $i$ in the $n$ modes.

Finally, the state $\rho^{n+k}$ can be assumed to be a mixture of the states $\sigma_i^{n+k}$ defined as:
\begin{equation}
\sigma_i^{n+k} = \frac{1}{{n+k+i-1 \choose i} } \sum_{\sum_{i_j} = i} |i_1, \cdots, i_{n+k}\rangle \!\langle i_1, \cdots, i_{n+k}|.
\end{equation}

%%%%%%%%%%%%%%%%%%%%%%%%%%%%%%%%%%%%%%%%%%%%%%%%%%%
%%%%%%%%%%%%%%%%%%%%%%%%%%%%%%%%%%%%%%%%%%%%%%%%%%%

\section{Proof of Lemma \ref{step2ter}}
\label{proof2ter}

\subsection{Preliminaries}

Let us define $F(\vec{a})$ as
\begin{equation}
F(\vec{a}) = \frac{1}{\pi^{n/2}} \int_{\|\vec{z}\|^2 \geq n d_0} \mathrm{d} \vec{z} e^{-\|\vec{z}-\vec{a}\|^2}. 
\end{equation}
The spherical symmetry of the function guarantees that $F(\vec{a})$ only depends on the norm of $\vec{a}$. Let us note $a = \|\vec{a}\|$. In the following, it is sometimes useful to think of the vector $\vec{a}$ as $\vec{a}=(a,0,\cdots, 0)$. The following bound will be useful in the proof of Lemma \ref{step2ter}.
\begin{lemma}
\label{prel2}
For any $d_0>0$,
\begin{equation}
F\left( \sqrt{\frac{n d_0}{2}}\right) \leq e^{- \beta n}, \quad \mathrm{with} \quad\beta = \left(1-\frac{1}{\sqrt{2}}\right)^{\!\!\! 2} \! d_0 - \frac{\log d_0}{2}.
\end{equation}
\end{lemma}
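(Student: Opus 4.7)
The plan is to interpret $F(\vec{a})$ as a Gaussian tail probability, use the triangle inequality to reduce it to a central $\chi^2$ tail, and then apply a Chernoff bound whose polynomial prefactor supplies exactly the missing $\frac{1}{2}\log d_0$ correction in the exponent.

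First I would view the integrand $\pi^{-n/2} e^{-\|\vec{z}-\vec{a}\|^2}$ as the density of a Gaussian vector $\vec{z}\sim \mathcal{N}(\vec{a},I/2)$, so that $F(\vec{a})=\Pr[\|\vec{z}\|^2\geq nd_0]$. By the spherical symmetry already noted I may take $\vec{a}$ to be any vector of norm $\sqrt{nd_0/2}$; writing $\vec{z}=\vec{a}+\vec{u}/\sqrt{2}$ with $\vec{u}\sim\mathcal{N}(0,I)$, the triangle inequality gives $\|\vec{z}\|\leq \|\vec{a}\|+\|\vec{u}\|/\sqrt{2}$. Consequently the event $\|\vec{z}\|^2\geq nd_0$ forces $\|\vec{u}\|^2\geq 2nd_0(1-1/\sqrt{2})^{2}=2nc_0 d_0$, and so
\begin{equation*}
F\!\left(\sqrt{nd_0/2}\right)\leq \Pr[Y\geq 2n c_0 d_0],\qquad Y:=\|\vec{u}\|^2\sim\chi^2(n).
\end{equation*}

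Next I would apply the standard Chernoff bound $\Pr[Y\geq y]\leq (1-2t)^{-n/2}e^{-ty}$, valid for any $0<t<1/2$. Optimizing in $t$ (legitimate as soon as $y>n$, i.e.\ $d_0>1/(2c_0)$, which lies in the regime of interest forced by the hypothesis $e^{-\beta n}\leq\epsilon/16$ of Theorem~\ref{theo_homo}) yields the well-known tail bound $\Pr[Y\geq y]\leq (y/n)^{n/2}e^{-(y-n)/2}$. Substituting $y=2nc_0 d_0$ gives
\begin{equation*}
F\!\left(\sqrt{nd_0/2}\right)\leq (2c_0 d_0)^{n/2}\,e^{-nc_0 d_0+n/2}=(2c_0 e\cdot d_0)^{n/2}\,e^{-nc_0 d_0}.
\end{equation*}

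Finally I would observe that $2c_0 e=(3-2\sqrt{2})e\approx 0.467<1$, so the prefactor $(2c_0 e)^{n/2}\leq 1$ may simply be discarded, leaving
\begin{equation*}
F\!\left(\sqrt{nd_0/2}\right)\leq d_0^{n/2}\,e^{-nc_0 d_0}=\exp\!\Bigl[-n\Bigl(c_0 d_0-\tfrac{1}{2}\log d_0\Bigr)\Bigr]=e^{-\beta n}.
\end{equation*}
The delicate point is the very first step: the two scales $\|\vec{a}\|^2=nd_0/2$ and the threshold $nd_0$ must be compared via the triangle inequality on $\|\cdot\|$ rather than on $\|\cdot\|^2$, which is exactly what converts the factor-of-two gap between them into the constant $c_0=(1-1/\sqrt{2})^{2}$. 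After that reduction the rest is a direct $\chi^2$ tail computation, and the slight slack in the Chernoff prefactor is precisely what produces the $\frac{1}{2}\log d_0$ correction appearing in $\beta$.
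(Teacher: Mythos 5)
Your proof is correct and follows essentially the same route as the paper: the shift-plus-triangle-inequality step producing $c_0=(1-1/\sqrt{2})^2$ is the paper's translation of the integration variable, and your $\chi^2(n)$ Chernoff bound is equivalent to the paper's bound on $\Gamma(n/2,nc_0d_0)/\Gamma(n/2,0)$ via the Poisson lower tail (discarding the factor $(2c_0e)^{n/2}\le 1$ is exactly the paper's use of $1+\log(2c_0)\le 0$). The restriction $d_0>1/(2c_0)$ that you flag is equally implicit in the paper's Chernoff step (which needs $\delta\ge 0$ there), so it is not a gap relative to the paper's own argument.
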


\begin{proof}
Let us first compute the following $n$-dimensional integral (we use spherical coordinates and recall that the surface of the unit sphere in $\mathbb{R}^n$ is $\frac{2\pi^{n/2}}{\Gamma(n/2)}$):
\begin{align}
\frac{1}{\pi^{n/2}} \int_{\sum_{i=1}^n z_i^2 \geq b^2} \exp \left(-\sum_{i=1}^n z_i^2 \right)  \mathrm{d}z_1 \cdots  \mathrm{d}z_n  =& \frac{1}{\Gamma(n/2)} \int_{R =b^2}^\infty R^{n/2-1} e^{-R} \mathrm{d}R \nonumber\\
=& \frac{\Gamma\left(\frac{n}{2},b^2\right)}{\Gamma\left(\frac{n}{2},0\right)}. \label{useful_eq}
\end{align}
Then, translating the variable $\vec{z}$ by $\vec{a}$, one obtains for $a \leq \sqrt{n d_0}$,
\begin{eqnarray}
F(a) &\leq & \frac{1}{\pi^{n/2}} \int_{\|\vec{z}\|^2 \geq (\sqrt{n d_0}-a)^2} \mathrm{d} \vec{z} e^{-\|\vec{z}\|^2}\\
&\leq& \frac{\Gamma\left(\frac{n}{2},(\sqrt{n d_0}-a)^2\right)}{\Gamma\left(\frac{n}{2},0\right)},
\end{eqnarray} 
where the first inequality holds because the integration domain contains the one of the definition of $F(a)$, and the second is the application of Eq.\ \ref{useful_eq} with $b = \sqrt{n d_0-1}$. 
Choosing $a = \sqrt{\frac{n d_0}{2}}$ finally gives
\begin{equation}
F\left( \sqrt{\frac{n d_0}{2}}\right) \leq \frac{\Gamma\left(\frac{n}{2},n d_0 c_0\right)}{\Gamma\left(\frac{n}{2},0\right)},
\end{equation}
with
\begin{equation}
c_0 = \left(1-\frac{1}{\sqrt{2}}\right)^2. 
\end{equation}

Let $X$ be a random variable with a Poisson distribution of parameter $\lambda = n c_0 d_0$ and assume $n$ to be even, then 
\begin{equation}
\mathrm{Pr}\left[X\leq n/2\right] =  \frac{\Gamma\left(\frac{n}{2},n d_0 c_0\right)}{\Gamma\left(\frac{n}{2},0\right)},
\end{equation}
which implies that
\begin{equation}
F\left( \sqrt{\frac{n d_0}{2}}\right) \leq \mathrm{Pr}\left[X\leq n/2\right].
\end{equation}

Chernoff bound applied to a Poisson distribution of parameter $\lambda$ gives:
\begin{equation}
\mathrm{Pr}[X \leq (1-\delta)\lambda]  \leq \left(\frac{e^{-\delta}}{(1-\delta)^{1-\delta} }\right)^\lambda,
\end{equation}
which gives here
\begin{equation}
F\left( \sqrt{\frac{n d_0}{2}}\right) \leq e^{- \tilde{\beta} n},
\end{equation}
with 
\begin{equation}
\tilde{\beta} = c_0 d_0 - \frac{1+\log(2 c_0 d_0)}{2}.
\end{equation}
Using the fact that $1 + \log(2 c_0) \leq 0$, one obtains
\begin{equation}
F\left( \sqrt{\frac{n d_0}{2}}\right) \leq e^{- \beta n},
\end{equation}
with
\begin{equation}
\beta = c_0 d_0 - \frac{\log d_0}{2}.
\end{equation}
\end{proof}

\subsection{Proof of the lemma}

One can use the same trick as in \cite{RC09} and extend the Hilbert space $\bigotimes_{i=1}^n \mathcal{H}_i$ to $\bigotimes_{i=1}^n \mathcal{H}_i \otimes \mathcal{H}_i'$ and write the operator $T_n$ as
\begin{equation}
T_n  = \int_{\sum_{i=1}^n x_i^2 + y_i^2 \geq n d_0} {}_{\mathcal{H}'}\langle 0 | U^{\otimes n} \left( |\vec{x}\rangle\!\langle \vec{x}|_\mathbf{S} \otimes |\vec{y}\rangle \! \langle \vec{y}|_\mathbf{\bar{S}} \right) U^{\dagger \otimes n}| 0\rangle_{\mathcal{H}'} \mathrm{d} \vec{x} \mathrm{d}\vec{y}
\end{equation}
where the subscripts $\mathbf{S}$ and $\mathbf{\bar{S}}$ refer to the two possible choices of quadrature (either described by $\bf{s}$ or its complement $\mathbf{\bar{s}}$), $U = e^\pm {\frac{\pi}{4}(a \otimes a'^{\dagger} - a^\dagger \otimes a')}$ is the beamsplitter operator and $|0\rangle_{\mathcal{H}'}$ is the vacuum state on the space  $\bigotimes_{i=1}^n \mathcal{H}_i'$. The $\pm$ sign depends on the specific choice of $s$. 
A possible choice for $\mathbf{S}$ and $\mathbf{\bar{S}}$ would be  $\mathbf{S} = \mathbf{Q} = (Q_1, \cdots, Q_n)$ and $\mathbf{\bar{S}} = \mathbf{P} = (P_1, \cdots, P_n)$. In this section, the blod font is used to describe vectors. 
We also denote $|\alpha=x+iy\rangle$ the coherent state centered in $(x,y)$ in phase space, and will use the equality $\langle 0|U |x\rangle_S |y\rangle_{\bar{S}} = \frac{1}{\sqrt{\pi}} |\alpha\rangle$. 

Let $R$ be the subset of $\mathbbm{R}^{2n}$ corresponding to the support of the integral above:
\begin{equation}
R = \left\{(\vec{x}, \vec{y}) \in \mathbbm{R}^{2n} : \sum_{i=1}^n x_i^2 + y_i^2 \geq n d_0 \right\}.
\end{equation}
For a string $\mathbf{s} \in \{x,y \}^n$, we define the set $\mathcal{R}_s$ as
\begin{equation}
\mathcal{R}_\mathbf{s}  =  \left\{(\vec{x}, \vec{y}) \in \mathbbm{R}^{2n} : \sum_{i=1}^n s_i^2  \geq n \frac{d_0}{2} \right\}.
\end{equation}
Note in particular that the coordinates corresponding to $\bar{\mathbf{S}}$ are unbounded in this set. 
We also introduce the operator $\mathrm{S} = S_1  \oplus \cdots \oplus  S_n$ where $S_i = Q_i (P_i)$ if $s_i = x_i (y_i)$. 
Noting $\bar{s}$ the complement of $s$, one has for all $s$:
\begin{equation}
R \subset \mathcal{R}_\mathbf{s}  \cup \mathcal{R}_{\bar{\mathbf{s}}} 
\end{equation}
which means that 
\begin{equation}
T_n \leq A_\mathbf{s} + A_{\bar{\mathbf{s}}}
\end{equation}
where 
\begin{equation}
A_s  = \int_{ (\vec{x},\vec{y})  \in \mathcal{R}_s } {}_{\mathcal{H}'}\langle 0 | U^{\otimes n} \left( |\vec{x}\rangle\!\langle \vec{x}|_\mathbf{S} \otimes |\vec{y}\rangle \! \langle \vec{y}|_\mathbf{S} \right) U^{\dagger \otimes n}| 0\rangle_{\mathcal{H}'} \mathrm{d} \vec{x} \mathrm{d}\vec{y}.
\end{equation}
Here, we used the fact that the integral of $ |\vec{y}\rangle \! \langle \vec{y}|_\mathbf{S} $ on $\mathbb{R}^n$ is equal to that of $ |\vec{y}\rangle \! \langle \vec{y}|_{\bar{\mathbf{s}}} $: this is simply the $n$-mode generalization of the well-known identity $\int |q\rangle\!\langle q| \mathrm{d}q =\int |p\rangle\!\langle p| \mathrm{d}p$ where $|q\rangle$ and $|p\rangle$ are eigenstates of the quadrature operators, $Q$ and $P$, respectively.
Since the previous relation holds for any string $s$, one has:
\begin{equation}
T_n \leq \frac{1}{2^n} \sum_{s \in \{x,y \}^n}  A_\mathbf{s} + A_{\bar{\mathbf{s}}}
\end{equation}

Let us compute the value of the operator $A_{\mathbf{s}}$:
\begin{align}
A_{\mathbf{s}} &= \int_{ (\vec{x},\vec{y})  \in \mathcal{R}_\mathbf{s} } {}_{\mathcal{H}'}\langle 0 |  \left( \left|\frac{\vec{x}+\vec{y}}{\sqrt{2}}\right\rangle\!\left\langle \frac{\vec{x}+\vec{y}}{\sqrt{2}} \right|_\mathbf{S} \otimes\left|\frac{\vec{x}-\vec{y}}{\sqrt{2}}\right\rangle\!\left\langle \frac{\vec{x}-\vec{y}}{\sqrt{2}} \right|_\mathbf{S} \right) | 0\rangle_{\mathcal{H}'} \mathrm{d} \vec{x} \mathrm{d}\vec{y}\\
 &=\frac{1}{\pi^{n/2}} \int_{ (\vec{x},\vec{y})  \in \mathcal{R}_\mathbf{s} } e^{-\left\|\vec{x}-\vec{y} \right\|^2/2} \left|\frac{\vec{x}+\vec{y}}{\sqrt{2}}\right\rangle\!\left\langle \frac{\vec{x}+\vec{y}}{\sqrt{2}} \right|_\mathbf{S}\mathrm{d} \vec{x} \mathrm{d}\vec{y}.
\end{align}
Changing variables: $\vec{z}_1 = \sqrt{2}\vec{x}, \vec{z}_2 = (\vec{x}+\vec{y})/\sqrt{2}$ (that is, $\vec{x}=\vec{z}_1/\sqrt{2}, \vec{y}=-\vec{z}_1/\sqrt{2} + \sqrt{2} \vec{z}_2$) gives 
\begin{align}
A_{\mathbf{s}} &=\frac{1}{\pi^{n/2}} \int_{ (\vec{z}_1,\vec{z}_2)  \mathrm{s.t.} \, (\vec{x},\vec{y}) \in \mathcal{R}_\mathbf{s}} e^{-\left\|\vec{z}_1-\vec{z}_2 \right\|^2} \left| \vec{z}_2\right\rangle\!\left\langle \vec{z}_2 \right|_\mathbf{S}\mathrm{d} \vec{z}_1 \mathrm{d}\vec{z}_2\\
&=\frac{1}{\pi^{n/2}} \int_{ (\vec{z}_1,\vec{z}_2)  \mathrm{s.t.} \, (\vec{x},\vec{y}) \in\mathcal{R}_\mathbf{s} } e^{-\left\|\vec{z}_1-\mathbf{S} \right\|^2} \left| \vec{z}_2\right\rangle\!\left\langle \vec{z}_2 \right|_\mathbf{S}\mathrm{d} \vec{z}_1 \mathrm{d}\vec{z}_2\\
&=\frac{1}{\pi^{n/2}} \int_{ \|\vec{z}_1\|^2 \geq n d_0} e^{-\left\|\vec{z}_1-\mathbf{S} \right\|^2} \mathrm{d} \vec{z}_1  \int_{\vec{z}_2 \in \mathbb{R}^n} \left| \vec{z}_2\right\rangle\!\left\langle \vec{z}_2 \right|_\mathbf{S}\mathrm{d}\vec{z}_2\\
&= \frac{1}{\pi^{n/2}} \int_{ \|\vec{z}_1\|^2 \geq n d_0} e^{-\left\|\vec{z}_1-\mathbf{S} \right\|^2} \mathrm{d} \vec{z}_1\\
&= F(\mathbf{S}).
\end{align}
We now show for all $a>0$, $F(\mathbf{S}) \leq P^{||\mathbf{S}||^2 \geq a^2} + F(a) \mathbbm{1}$.
To prove it, we need to establish that for any eigenvector $|\vec{s}\rangle$ of the operator $\mathbf{S}$, it holds that 
\begin{equation}
\label{positive}
\langle \vec{s}|F(\mathbf{S})| \vec{s}\rangle \leq \langle \vec{s}| P^{||\mathbf{S}||^2 \geq a^2}| \vec{s}\rangle + \langle \vec{s}| F(a) \mathbbm{1}| \vec{s}\rangle.
\end{equation}

There are two possibilities, 
\begin{itemize}
\item either $||\vec{s}||^2 \geq a^2$, in which case Eq.\ \ref{positive} reads $ F(||\vec{s}||) \leq 1 + F(a)$, which clearly holds, 
\item or $||\vec{s}||^2 \leq a^2$, in which case Eq.\ \ref{positive} reads $F(||\vec{s}||) \leq F(a)$, which holds because $F(x)$ is an increasing function for $x\geq 0$.  
\end{itemize}
Finally, one obtains 
\begin{equation}
T_n \leq \frac{2}{2^n} \sum_{s \in \{x,y \}^n}  P^{Z_n(\mathbf{s}) \geq a^2/n } + 2 F(a) \mathbbm{1}.
\end{equation}
Choosing $a = \sqrt{n d_0/2}$ gives
\begin{equation}
T_n \leq 2 W_n+ 2 F\left(\sqrt{\frac{n d_0}{2}}\right) \mathbbm{1} \leq 2(W_n + e^{-\beta n} \mathbbm{1}),
\end{equation}
which concludes the proof of Lemma \ref{step2ter}.

%%%%%%%%%%%%%%%%%%%%%%%%%%%%%%%%%%%%%%%%%%%%%%%%%%%
%%%%%%%%%%%%%%%%%%%%%%%%%%%%%%%%%%%%%%%%%%%%%%%%%%%

\end{widetext}

\end{document}